\newtheorem{thm}{Theorem}[section]
\newtheorem{cor}[thm]{Corollary}
\newtheorem{lem}[thm]{Lemma}
\theoremstyle{definition}
\theoremstyle{remark}
\numberwithin{thm}{section}
\newcommand{\R}{{\mathord{\mathbb R}}}
\newcommand{\N}{{\mathord{\mathbb N}}}
\newcommand{\C}{{\mathord{\mathbb C}}}
\newcommand{\Z}{{\mathord{\mathbb Z}}}
\newcommand{\E}{{\mathord{\mathbb E}}}
\def\idty{{\mathchoice {\mathrm{1\mskip-4mu l}} {\mathrm{1\mskip-4mu l}} %
{\mathrm{1\mskip-4.5mu l}} {\mathrm{1\mskip-5mu l}}}}
\DeclareMathOperator{\I}{\mathbbm{1}}
\DeclareMathOperator{\tr}{tr}
\DeclareMathOperator{\Tr}{Tr}
\DeclareMathOperator{\Span}{span}
\DeclareMathOperator{\diag}{diag}
\begin{document}

\title[Entanglement Dynamics of Disordered Quantum XY Chains]{Entanglement Dynamics\\ of Disordered Quantum XY Chains}

\author[H. Abdul-Rahman]{Houssam Abdul-Rahman}
\address{Department of Mathematics\\
University of Alabama at Birmingham\\
Birmingham, AL 35294 USA}
\email{houssam@uab.edu}
\author[B. Nachtergaele]{Bruno Nachtergaele}
\address{Department of Mathematics, University of California, Davis,
Davis, CA 95616}
\email{bxn@math.ucdavis.edu}
\author[R. Sims]{Robert Sims}
\address{Department of Mathematics\\
University of Arizona\\
Tucson, AZ 85721, USA}
\email{rsims@math.arizona.edu}
\author[G. Stolz]{G\"unter Stolz}
\address{Department of Mathematics\\
University of Alabama at Birmingham\\
Birmingham, AL 35294 USA}
\email{stolz@math.uab.edu}

\date{}


\begin{abstract}
We consider the dynamics of the quantum XY chain with disorder under the general assumption that the expectation
of the eigenfunction correlator of the associated one-particle Hamiltonian satisfies a decay estimate typical of Anderson localization. 
We show that, starting from a broad class of product initial states, entanglement remains bounded for all times. For the XX chain,
we also derive bounds on the particle transport which, in particular, show that the density profile of initial states that consist of fully
occupied and empty intervals, only have significant dynamics near the edges of those intervals, uniformly for all times.
\end{abstract}

\maketitle

Keywords: XY Spin Chain, Disordered Systems, Quantum Entanglement, Many-Body Localization

\vskip .3cm

MSC: 82B44

%
%

\allowdisplaybreaks
\section{Introduction}\label{sec:intro}
\subsection{Motivation and Context}
In this work, we study the entanglement dynamics in a class of disordered quantum XY chains in the dynamical localization regime. We prove that for a large class of product initial conditions the bipartite entanglement satisfies a constant bound, independent of time and system size, i.e., the same
type of `Area Law' bound Hastings proved for gapped ground states in one dimension \cite{Hastings}. 
This is in agreement with the numerical predictions of \cite{Bardarsonetal} and it confirms the doubts expressed in \cite{chiara:2006} that the observed logarithmic growth of entanglement for short times holds for all times.
Disordered XY chains show many of the features usually associated with Many-Body-Localization (MBL) \cite{Baskoetal,PalHuse}. By exhibiting complete localization, the entanglement dynamics of the XY chain, however, appears to deviate from the logarithmic growth in time generically expected and observed numerically in other model systems \cite{chiara:2006,Znidaricetal,Bardarsonetal} and supported by renormalization group arguments \cite{vosk:2014}.

There are three major categories of dynamical behavior in extended quantum systems: ballistic, diffusive, and localized. It is tempting to associate these
three classes of transport behavior with three types of systems in which we expect them to occur: integrable homogeneous systems, generic non-integrable
but homogeneous systems, and strongly disordered systems. The integrable systems are clearly non-ergodic. The non-integrable systems are expected to be
generically ergodic. The ergodic properties of many-body systems with sufficiently random interactions are expect to display the syndrome called
Many-Body Localization (MBL) and the ergodic properties associated with MBL are currently a subject of debate
\cite{Baskoetal,PalHuse,OganesyanHuse,deroeck:2014,deroeck:2015,abanin:2015}.

To gain further insight in the nature of MBL, it is useful to study simple model systems. A number of interesting discoveries have been made in recent years
through numerical studies of the quantum Ising chain, the XY chain, and the XXZ chain \cite{Znidaricetal,Bardarsonetal}. E.g., it has become clear that the transport behavior of particles, energy, and entanglement do not always line up. E.g., it was found that in a particular quantum Ising chain with an external field that has non-vanishing $x$ and $z$ components, which is non-integrable, energy transport is diffusive but entanglement between two halves of the chain in a product initial state spreads ballistically \cite{kim:2013}. In the XXZ chain in a random field, a system that exhibits features of MBL, it was found numerically that entanglement grows to a saturation value which itself diverges with the length of the chain, except at the XY point, where it reaches a constant value independent of system size \cite{Bardarsonetal}. Arguments for a logarithmic growth in time have been given in \cite{Serbynetal2013a}. These arguments are based on the conjectured existence of an extensive number of localized conserved quantities many-body localized systems \cite{HuseOganesyan}. For the quantum Ising model in a random field, and under  some additional assumptions, an approach to a mathematical proof for the existence of such a set of local conserved quantities, called pseudo-spins, is described in \cite{imbrie:2014}. Interesting in this context is also the recent \cite{Mas} which outlines a mathematical proof of exponentially decaying correlations for the ground state of the XXZ chain in a quasi-periodic field of Harper-type, thus giving a (stationary) indication of MBL at low energy.

In the XY chain in a random field, the conserved quantities are given by fermionic eigenmodes that arise in the exact solution based on the Jordan-Wigner transformation \cite{LSM}. In the random case, these eigenmodes are obtained by diagonalizing an associated one-particle Anderson model on the one-dimensional lattice, and hence have strong localization properties \cite{Stolz}. They are not strictly local, however, because of the non-local character of the Jordan-Wigner transformations that maps the spin chain onto a system of quasi-free fermions. It has been shown previously that, as a consequence, the
XY chain in a random magnetic field shows complete dynamical localization. In fact, a strong form of dynamical localization,
i.e a zero-velocity Lieb-Robinson bound on-average, was proven in \cite{HamzaSimsStolz}. An averaged bound on dynamic correlations,
which is uniform in all eigenstates, is proven in \cite{SimsWarzel}.   Also, it was proved that the bipartite entanglement entropy of its ground states and energy eigenstates 
satisfy an area law \cite{PasturSlavin, AR-S}. In this work we prove that, for a large class of initial states, the entanglement entropy remains bounded for all times. Although the system could be regarded as integrable, the ballistic motion of quasi-particles responsible for the predicted ballistic growth of entanglement in homogeneous integrable systems is missing due to the localization phenomenon, see \cite{kim:2013} and references therein.

In the symmetric XY chain, sometimes referred to as the XX chain, with a random field in the $Z$ direction, the third component of the spin is conserved. In the fermion language this corresponds to the conservation of particle number. In this case one can obtain a good intuitive picture of the localization of the dynamics, and this is our first result: with overwhelming probability particles propagate only over a finite distance.
This is in stark contrast with the homogeneous XY chain \cite{antal:1998,antal:1999,ogata:2002}.
 In the general, asymmetric XY chain, the only conserved quantities are the occupation numbers of the quasi-particles given by the fermionic eigenmodes which, under our assumptions, are all exponentially localized. This implies that the relationship between the on-site fermions and the quasi-particles is itself quasi-local.

As mentioned above it has been argued that, generically, this should imply that the bipartite entanglement between two halves of the chain should grow no faster than logarithmically. For the class of initial states we consider here, which includes arbitrary product vectors, we prove a uniform bound on the entanglement. It is still possible that for some separable initial states the bipartite entanglement increases logarithmically. In any case, this is
a very small amount of entanglement compared to a random state of the chain which has entropy of entanglement approaching its maximum
possible value with near full
probability \cite{hayden:2006}. In view of the probability-zero nature of the states with bounded or logarithmic entanglement, one may wonder whether
the asymptotic behavior for large time is reliably accessible through numerical computation. It remains to be seen whether logarithmic entanglement
growth is a robust indicator of MBL. A recent proposal to detect MBL experimentally by measuring the entanglement is given in \cite{ho:2015}.

\subsection{The Model and Main Results}\label{sec:main results}

For any $n \geq 1$, we consider an anisotropic XY spin chain in transversal magnetic field on $[1,n] := \{1,\ldots,n\}$, given by the self-adjoint Hamiltonian
\begin{equation} \label{eq:anisoxychain}
H = H_{[1,n]} = - \sum_{j=1}^{n-1} \mu_j [ (1+\gamma_j) \sigma_j^x \sigma_{j+1}^x + (1-\gamma_j) \sigma_j^y \sigma_{j+1}^y] - \sum_{j=1}^n \nu_j \sigma_j^z \, 
\end{equation}
acting on the Hilbert space $\mathcal{H} = \bigotimes_{j=1}^n \C^2$. The parameters $\mu_j$, $\gamma_j$, and $\nu_j$ describe the interaction strength, anisotropy and field strength, respectively.
By $\sigma_j^x$, $\sigma_j^y$, and $\sigma_j^z$ we denote the standard Pauli matrices acting on the $j$-th component of the tensor product
$\mathcal{H}$. Our main interest is in the behavior of random systems, and so we often think of
the parameters, indicated above, as the first $n$ components of sequences of real-valued random
variables indexed by $j\in \N$. To be more precise, our standing assumptions will be that all three
sequences $\{ \mu_j \}_{j \in \N}$, $\{ \gamma_j \}_{j \in \N}$, and $\{ \nu_j \}_{j \in \N}$
are i.i.d., also independent from one another, and that they have distributions of bounded support.

More assumptions on the random parameters will be implicit, as we will state our results under the condition of {\it eigencorrelator localization} of the {\it effective one-particle Hamiltonian} $M$ associated with $H$, given by (\ref{effHam}) below. This will be understood as the existence of
a non-increasing function $F:[0,\infty)\to (0,\infty)$, of which we will require that it vanishes sufficiently fast as the argument tends to $\infty$, and such that
\begin{equation} \label{ecorloc}
\E \left( \sup_{|g|\le 1} \|g(M)_{jk}\| \right) \le F(|j-k|),
\end{equation}
uniformly in $n\in \N$ and $1\le j,k \le n$. Here, the supremum is taken over arbitrary Borel functions $g:\R \to \C$ with modulus pointwise bounded by $1$ and $g(M)$ is defined via the functional calculus of hermitean matrices.
Just as $M$ in (\ref{effHam}), we view $g(M)$ as an $n\times n$-matrix with $2\times 2$-matrix-valued entries and thus $\|\cdot\|$ on the left of (\ref{ecorloc}) is a norm on the $2\times 2$-matrices, which, for definiteness, we choose to be the Euclidean matrix norm.

Typical examples of the function $F$ in the RHS of (\ref{ecorloc}) are $F(r)=Ce^{-r/\xi}$, and $F(r)=C/(1+r)^{\beta}$, where $\xi$, $\beta$ and $C$
are positive constants.
More specific assumptions, e.g., on the size of $\beta$, as required for our results, will be given later. At the end of this introduction we will discuss conditions on the random variables for which (\ref{ecorloc}) is known for suitable $F$.

We have two main types of results. Both describe the behavior of certain dynamic quantities in the
regime of many-body localization. By assuming eigencorrelator localization (\ref{ecorloc}), our results show that one-partice localization of the effective Hamiltonian $M$ implies certain forms of many-body localization for the disordered XY spin chain $H$.

Our first result concerns `particle number transport' and is restricted to the case of the isotropic XY chain, i.e.\ we assume that $\gamma_j =0$ for all $j$. In this case the Hamiltonian $H$ commutes with the number operator
\begin{equation} \label{numaa*}
\mathcal{N} = \sum_{j=1}^n a_j^*a_j ,\quad \mbox{where} \quad  a^*=\begin{pmatrix} 0 & 1 \\ 0 & 0 \\ \end{pmatrix} \quad
\mbox{and} \quad  a=\begin{pmatrix} 0 & 0 \\ 1 & 0 \\ \end{pmatrix}
\end{equation}
are the basic raising and lowering operators. Denote the up-down-spin product basis states by
\begin{equation} \label{eq:updown}
e_{\alpha} := e_{\alpha_1} \otimes \ldots \otimes e_{\alpha_n}, \quad \alpha \in \{0,1\}^n,
\end{equation}
where
\begin{equation}
e_{0}:=\begin{pmatrix}
                    0 \\
                    1 \\
                  \end{pmatrix}=|\downarrow\rangle \ \text{and}\ \ e_{1}:=\begin{pmatrix}
                                                                       1 \\
                                                                       0 \\
                                                                     \end{pmatrix}=|\uparrow\rangle .
\end{equation}
The number operators counts the number of up-spins in $e_{\alpha}$: $\mathcal{N} e_\alpha = ke_\alpha$, where $k = \# \{j: \alpha_j=1\}$. The isotropic XY chain $H$ leaves the eigenspaces to the eigenvalues $k=0,\ldots,n$ of $\mathcal{N}$ invariant.

For a subset $S\subset \Lambda := [1,n]$ we also define the local number operator, measuring the number of up-spins in $S$, as
\begin{equation}
\mathcal{N}_S = \sum_{j \in S} a_j^*a_j.
\end{equation}

As initial states $\rho$ we will consider product states given by arbitrary density profiles,
\begin{equation} \label{densityprofile}
\rho = \bigotimes_{j=1}^n \rho_j, \quad \rho_j = \begin{pmatrix} \eta_j & 0 \\ 0 & 1-\eta_j \end{pmatrix}, \quad 0 \le \eta_j \le 1, \quad j=1,\ldots,n.
\end{equation}

For these initial states $\rho$ we are interested in how the expectation of the observable $\mathcal{N}_S$ changes under the Heisenberg evolution of $H$. The latter is given by
$\tau_t(A) = e^{itH}Ae^{-itH}$ for any $A \in \mathcal{B}( \mathcal{H})$, the bounded linear operators over $\mathcal{H}$, while the Schr\"odinger evolution of the state $\rho$ is $\rho_t = e^{-itH} \rho e^{itH}$. Denoting the expectation of an observable $A$ in the state $\rho$ by  $\langle A \rangle_{\rho} := {\rm Tr} [ A  \rho ]$, we thus have to analyze the quantity
\begin{equation} \label{pnfq}
\langle \mathcal{N}_{S} \rangle_{\rho_t} = \langle \tau_t( \mathcal{N}_{S}) \rangle_{\rho} .
\end{equation}

With $\mathbb{E}(\cdot)$ denoting the disorder average, we will prove the following bound on the number of up-spins (``particles'') in $S$:
\begin{thm}\label{thm:PNF}
Consider $H$ with $\gamma_j=0$ for all $j$ and assume eigencorrelator localization (\ref{ecorloc}) for the effective Hamiltonian $M$.  Then
\begin{equation} \label{eq:PNF}
\mathbb{E}\left(\sup_t\langle \mathcal{N}_{S} \rangle_{\rho_t}\right)\leq \sum_{j\in S} \sum_{k=1}^n \eta_k F(|j-k|).
\end{equation}
\end{thm}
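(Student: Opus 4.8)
The plan is to diagonalize the isotropic chain by the Jordan--Wigner transformation, reduce $\langle \tau_t(\mathcal{N}_S)\rangle_{\rho}$ to a sum of squared entries of a one-particle unitary, and then invoke eigencorrelator localization (\ref{ecorloc}). First I would introduce fermion operators $c_j = \big(\prod_{l<j}\sigma_l^z\big)a_j$. Because $\gamma_j=0$, the pairing terms drop out and $H$ is number conserving: $H = \sum_{j,k} c_j^* A_{jk} c_k$ up to an additive constant, where $A$ is the tridiagonal Anderson-type matrix with entries $-2\mu_j$ off the diagonal and $-2\nu_j$ on it. This is the number-conserving reduction of the effective Hamiltonian $M$ of (\ref{effHam}), and the scalar entries of functions of $A$ are controlled by the $2\times 2$-block entries of the corresponding functions of $M$. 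Since the Jordan--Wigner strings cancel on the diagonal, $a_j^* a_j = c_j^* c_j$, so that $\mathcal{N}_S = \sum_{j \in S} c_j^* c_j$.

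Next, the Heisenberg dynamics of the fermions is linear, $\tau_t(c_j) = \sum_k (e^{-itA})_{jk} c_k$, as follows from $[H,c_l] = -\sum_k A_{lk} c_k$. Substituting into $\tau_t(\mathcal{N}_S) = \sum_{j\in S}\tau_t(c_j^*)\tau_t(c_j)$ and using that $A$ is Hermitian gives
\[
\tau_t(\mathcal{N}_S) = \sum_{k,m}\Big(\sum_{j\in S}(e^{itA})_{kj}(e^{-itA})_{jm}\Big)c_k^* c_m.
\]
The decisive simplification comes from taking the expectation in the product state $\rho$ of (\ref{densityprofile}): since $\rho$ is diagonal in the basis $e_\alpha$ and $c_k^* c_m$ changes the occupation at two distinct sites when $k\neq m$, every off-diagonal two-point function vanishes, $\langle c_k^* c_m\rangle_\rho = 0$ for $k\neq m$, while $\langle c_k^* c_k\rangle_\rho = \langle a_k^* a_k\rangle_\rho = \eta_k$. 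Hence only the diagonal survives and
\[
\langle \tau_t(\mathcal{N}_S)\rangle_\rho = \sum_{k=1}^n \eta_k \sum_{j\in S}\big|(e^{-itA})_{jk}\big|^2.
\]

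Finally, since $e^{-itA}$ is unitary each entry obeys $|(e^{-itA})_{jk}|\le 1$, so $|(e^{-itA})_{jk}|^2 \le |(e^{-itA})_{jk}|$. Taking $g_t(x)=e^{-itx}$, which satisfies $|g_t|\le 1$ pointwise, I would bound $\sup_t |(e^{-itA})_{jk}| \le \sup_{|g|\le 1}|g(A)_{jk}|$, pass to the $2\times 2$-block entries of $M$, and apply (\ref{ecorloc}) to get $\mathbb{E}(\sup_t|(e^{-itA})_{jk}|)\le F(|j-k|)$. Moving $\sup_t$ inside the (finite, nonnegative) sum and then averaging term by term over the disorder, with the deterministic weights $\eta_k$ pulled out, yields exactly the right-hand side of (\ref{eq:PNF}). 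The only genuinely delicate point is the bookkeeping of the Jordan--Wigner reduction in the isotropic case and the verification that the scalar evolution entries $(e^{-itA})_{jk}$ are dominated by the block eigencorrelator of $M$; the remainder is a direct computation whose cleanness hinges entirely on the diagonality of the product initial state.
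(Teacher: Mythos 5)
Your proof is correct and follows essentially the same route as the paper's: Jordan--Wigner reduction to the number-conserving quadratic form $c^*Ac$, linear Heisenberg dynamics $\tau_t(c)=e^{-itA}c$, vanishing of the off-diagonal two-point functions in the diagonal product state so that $\langle \mathcal{N}_S\rangle_{\rho_t}=\sum_{j\in S}\sum_{k}\eta_k\,|(e^{-itA})_{jk}|^2$, and then $|\cdot|^2\le|\cdot|$ combined with eigencorrelator localization (\ref{ecorloc}). The only deviations are cosmetic --- you absorb the factor $2$ and a harmless sign into $A$ and derive the dynamics from $[H,c_l]$ rather than via the diagonalizing $\tilde b$-modes --- and you in fact make explicit one point the paper leaves implicit, namely that the scalar entries of $g(A)$ are dominated by the $2\times 2$-block entries of $g(M)$, so that (\ref{ecorloc}) for $M$ applies to $e^{-itA}$.
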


With the same proof, see Section~\ref{sec:pnf} below, we can similarly bound the number of down-spins (``holes''): If $\tilde{\mathcal{N}}_S := \sum_{j\in S} a_j a_j^* = |S|\idty - \mathcal{N}_S$, then
\begin{equation} \label{eq:HNF}
\mathbb{E}\left(\sup_t\langle \tilde{\mathcal{N}}_{S} \rangle_{\rho_t}\right)\leq \sum_{j\in S} \sum_{k=1}^n (1-\eta_k) F(|j-k|).
\end{equation}

An interesting special case of Theorem~\ref{thm:PNF} arises when starting with all up-spins in a subinterval $\Lambda_0 = [a,b] \subset \Lambda$ and all downspins in $\Lambda \setminus \Lambda_0$, i.e.\ for the pure state $\rho = |\varphi\rangle \langle \varphi|$ with
\begin{equation} \label{updownvec}
\varphi = |\downarrow\rangle^{\otimes (a-1)}\otimes |\uparrow\rangle^{\otimes (b-a+1)}\otimes|\downarrow\rangle^{\otimes(n-b)},
\end{equation}
meaning $\eta_k =1$ for $k\in \Lambda_0$ and $\eta_k=0$ otherwise in (\ref{densityprofile}). If $S\subset \Lambda \setminus \Lambda_0$, then (\ref{eq:PNF}) implies
\begin{equation} \label{eq:PNbound}
\mathbb{E}\left(\sup_t\langle \mathcal{N}_{S} \rangle_{\rho_t}\right)\leq \sum_{j\in S} \sum_{k\in \Lambda_0} F(|j-k|) \leq 2 \sum_{j=d(S,\Lambda_0)}^\infty j F(|j|),
\end{equation}
where $d(A,B) = \min \{|a-b|: a\in A, b\in B\}$ denotes the distance between two sets. This is a bound on the expectation of the number of up-spins which penetrate from $\Lambda_0$ into $S$. If $F$ has sufficiently rapid decay, e.g.\ $F(r) \le C/(1+r)^{\beta}$ for some $\beta>2$, then the right hand side of (\ref{eq:PNbound}) is not only finite, uniformly in the sizes of $\Lambda$ and $\Lambda_0$, but decaying for growing distance $d(S,\Lambda_0)$. More precisely, for the case of power decay $F(r) = C/(1+r)^{\beta}$, we get
\begin{equation} \label{eq:noPT}
\mathbb{E}\left(\sup_t\langle \mathcal{N}_{S} \rangle_{\rho_t}\right)\leq C'/d(S,\Lambda_0)^{\beta-2}.
\end{equation}
Similarly, exponential decay of $F$ leads to exponential decay of $\mathbb{E}\left(\sup_t\langle \mathcal{N}_{S} \rangle_{\rho_t}\right)$ in $d(S,\Lambda_0)$ (with slightly reduced decay rate).

The proof of Theorem~\ref{thm:PNF} does not extend to the case of the anisotropic XY chain, as can be expected physically: In the anisotropic case we do not have particle number conservation. Thus up-spins can be created by local properties of the dynamics, for example by simply flipping a down-spin into an up-spin. This can not be prevented by being in the regime of many-body localization, which is the main physical mechanism exploited in our proof.

Results similar to Theorem~\ref{thm:PNF} have recently been shown in \cite{SW} for the disordered Tonks-Girardeau gas, which can be understood as a continuum analogue of the isotropic XY chain. In fact, the corresponding Proposition~2.1 of \cite{SW} mostly serves as a warm-up for deeper results on absence of Bose-Einstein condensation and superfluidity. As was pointed out to us by a referee, the methods used to prove Propostion~2.1 in \cite{SW} can easily be adjusted to our discrete setting and provide two nice extensions of Theorem~\ref{thm:PNF}. We will discuss this at the end of Section~\ref{sec:pnf}.

Our second result concerns the entanglement dynamics of states formed by taking certain products of eigenstates.
The set-up is as before, but now we consider the general anisotropic XY chain Hamiltonian $H=H_{[1,n]}$, as in (\ref{eq:anisoxychain}), on volume $\Lambda = [1,n]$. We will need to assume that, for every $n\in \N$,
\begin{equation} \label{eq:simple}
H_{[1,n]} \:\mbox{almost surely has simple spectrum}.
\end{equation}

Let $\Lambda_0 = [a,b]$ be an arbitrary subinterval of $\Lambda$ and consider the bi-partite decomposition
\begin{equation} \label{bipd}
\mathcal{H} = \mathcal{H}_1 \otimes \mathcal{H}_2 \quad \mbox{with} \quad \mathcal{H}_1 = \bigotimes_{j\in \Lambda_0} \C^2 \quad \mbox{and} \quad \mathcal{H}_2 = \bigotimes_{j\in \Lambda \setminus \Lambda_0} \C^2.
\end{equation}

By $\rho^1 = \Tr_{\mathcal{H}_2} \rho$ we denote the partial trace of a state $\rho$ in $\mathcal{H}$ found by tracing out the variables in $\Lambda \setminus \Lambda_0$ and $\mathcal{E}(\rho) = \mathcal{S}(\rho^1)$ the entanglement of $\rho$ with respect to the decomposition of $\Lambda$ into $\Lambda_0$ and $\Lambda \setminus \Lambda_0$.

We will consider the Schr\"odinger time evolution $\rho_t = e^{-itH} \rho e^{itH}$ of suitable initial states $\rho$ and study how their bi-partite entanglement $\mathcal{E}(\rho_t)$ with respect to this decomposition grows in time. Towards this goal we can handle initial states which are products of {\it any finite number} of eigenstates of restrictions of the XY Hamiltonian (\ref{eq:anisoxychain}) to subsystems.

More precisely, let
$1 = r_0 < r_1 < \cdots < r_m =n$ be integers and, for each $1 \leq k < m$,
set $\Lambda_k = [ r_{k-1}, r_k-1]$, while $\Lambda_m = [r_{m-1}, r_m]$.
Thus $\Lambda$ is a disjoint union of the intervals $\Lambda_k$ with $k=1, \cdots, m$.

For $1 \leq k \leq m$, consider the restrictions $H_{\Lambda_k}$ of the XY Hamiltonian $H=H_{\Lambda}$ to $\Lambda_k$, defined similar to (\ref{eq:anisoxychain}), which are self-adjoint operators on $\mathcal{H}_{\Lambda_k} = \bigotimes_{j \in \Lambda_k} \mathbb{C}^2$.

For each $k$ let $\psi_k$ be a normalized eigenstate of $H_{\Lambda_k}$ and let $\rho_{\psi_k} = |\psi_k \rangle \langle \psi_k|$. We choose the initial state
\begin{equation} \label{eq:instate}
\rho = \bigotimes_{k=1}^m \rho_{\psi_k}.
\end{equation}

We can now state our second main result, an area law for the entanglement dynamics of product states of the form (\ref{eq:instate}):

\begin{thm}\label{thm:DAL}
Assume that the anisotropic random XY chain (\ref{eq:anisoxychain}) has almost sure simple spectrum (\ref{eq:simple}) and that $M$ satisfies eigencorrelator localization (\ref{ecorloc}) with $F(r) = C/(1+r)^{\beta}$ for some $\beta>6$. Consider an initial state $\rho$ as given by (\ref{eq:instate}) and its Schr\"odinger evolution $\rho_t = e^{-iHt} \rho e^{iHt}$ under the full XY chain Hamiltonian $H$.

Then there exists $C<\infty$ such that
\begin{equation}\label{AreaLaw}
\mathbb{E}\left(\sup_{t,\{\psi_k\}_{k=1,2,\ldots,m}}\mathcal{E}(\rho_t)\right) \le C
\end{equation}
for all $n$, $m$, $a$ and $b$ with $1\le a \le b \le n$, $1\le m \le n$ and all decompositions $\Lambda_1,\ldots, \Lambda_m$ of $\Lambda=[1,n]$. In (\ref{AreaLaw})  the supremum is taken over all $t\in \R$ and all normalized eigenfunctions $\psi_k$ of $H_{\Lambda_k}$, $k=1,\ldots,m$.
\end{thm}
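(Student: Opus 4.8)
\emph{Strategy.} The plan is to use the exact solution of the XY chain by the Jordan--Wigner transformation, under which $H$ becomes a Hamiltonian that is quadratic in fermion operators with one-particle generator $M$, and to reduce the bound on $\mathcal{E}(\rho_t)$ to a decay estimate for the two-point correlation matrix of $\rho_t$ across the two cut points $a$ and $b$ bounding $\Lambda_0$. First I would verify that $\rho_t$ is a pure quasi-free (Gaussian) state. Each $\psi_k$ is an eigenstate of the quadratic Hamiltonian $H_{\Lambda_k}$, hence quasi-free on $\Lambda_k$; the simple spectrum assumption (\ref{eq:simple}) guarantees that each $\psi_k$ is an eigenvector of the parity $\prod_j\sigma_j^z$, so that the product $\rho=\bigotimes_k\rho_{\psi_k}$ over the disjoint blocks $\Lambda_k$ is quasi-free and its correlation matrix $\Gamma_0$ is block diagonal with respect to the partition $\{\Lambda_k\}$ (cross-block two-point functions vanish because odd products of fermion operators have zero expectation in a parity eigenstate). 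Since $\tau_t$ acts as a Bogoliubov transformation generated by $M$, quasi-freeness is preserved and, in the self-dual indexing, $\Gamma_t=e^{itM}\Gamma_0\,e^{-itM}$.

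\emph{Reduction to the off-diagonal block.} For a pure quasi-free state $\Gamma_t$ is an orthogonal projection; writing it in block form with respect to $\Lambda_0\mid\Lambda\setminus\Lambda_0$ as $\left(\begin{smallmatrix} A & B\\ B^* & D\end{smallmatrix}\right)$, the identity $\Gamma_t^2=\Gamma_t$ yields $A-A^2=BB^*$, so each eigenvalue $\nu$ of the reduced correlation matrix $A$ (which determines $\rho_t^1$) satisfies $\nu(1-\nu)=s^2$ for a singular value $s$ of $B=\Gamma_t^{12}$. Hence $\mathcal{E}(\rho_t)=\mathcal{S}(\rho_t^1)=\sum_i h(\nu_i)\le\sum_i\eta(s_i)$, where $h$ is the binary entropy and $\eta$ is bounded with $\eta(s)\sim 2s^2\log(1/s)$ as $s\to0$. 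The remaining task is to bound $\sum_i\eta(s_i)$ by the entries $\Gamma_t^{12}(j,k)$, $j\in\Lambda_0$, $k\notin\Lambda_0$, which I would do by a Schatten/entrywise comparison that absorbs the logarithm in $\eta$ into a slightly reduced power of the entries.

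\emph{Disorder average and localization.} From $\Gamma_t=e^{itM}\Gamma_0e^{-itM}$ and block-diagonality of $\Gamma_0$,
\[
\|\Gamma_t^{12}(j,k)\|\le\sum_\ell\sum_{p,q\in\Lambda_\ell}\|(e^{itM})_{jp}\|\,\|\Gamma_0(p,q)\|\,\|(e^{itM})_{kq}\|,
\]
with $\|\Gamma_0(p,q)\|\le C$. Since the two propagator factors are correlated random variables the expectation does not factor; I would instead use Cauchy--Schwarz together with $\|(e^{itM})_{jp}\|\le1$ and $\sup_t\|(e^{itM})_{jp}\|\le\sup_{|g|\le1}\|g(M)_{jp}\|$ to obtain, after applying $\E(\sup_t\,\cdot)$, a bound by $F(|j-p|)^{1/2}F(|k-q|)^{1/2}$. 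The constraint $p,q\in\Lambda_\ell$, combined with the localization of the propagator (forcing $p$ near $j$ inside $\Lambda_0$ and $q$ near $k$ outside $\Lambda_0$), confines the significant contributions to a common block straddling a cut point; this is the mechanism that keeps the cross-cut correlations concentrated within a localization length of $a$ and $b$ and yields a bound independent of $n$, $m$, $a$, $b$ and the block sizes. Summing the resulting geometric series over $j,k$ and over the two cuts, and accounting for the multiplicity $\sim d$ of sites at distance $d$ from a cut, the square-root loss from Cauchy--Schwarz (effectively $\beta\mapsto\beta/2$) and the extra power coming from the entropy function, the series converges provided $\sum_d d^2 F(d)^{1/2}<\infty$, i.e.\ $\beta>6$, giving the uniform constant in (\ref{AreaLaw}).

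\emph{Main obstacle.} The hardest part is the combination of the last two steps: producing a quasi-free entanglement bound that is sufficiently linear in the small cross-cut correlations (controlling the $s^2\log(1/s)$ nonlinearity of the entropy) while remaining compatible with the disorder average, where the unavoidable Cauchy--Schwarz step costs a factor $F^{1/2}$ and thereby pins the threshold at $\beta>6$, rather than the $\beta>2$ that sufficed for Theorem~\ref{thm:PNF}.
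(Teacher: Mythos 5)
There is a genuine gap at the disorder-averaging step, and it is precisely the point where your proposal discards the information the paper's proof depends on. You bound the middle factor by $\|\Gamma_0(p,q)\|\le C$ and then argue that localization of the two propagators, together with the constraint $p,q\in\Lambda_\ell$, confines the significant contributions to blocks straddling a cut. This confinement does not produce smallness: nothing prevents a single block $\Lambda_\ell$ from containing sites near $j$ \emph{and} sites near $k$. In the extreme case $m=1$ (a single block, $\rho$ an eigenstate of the full chain, which the theorem explicitly allows), taking $p=j$, $q=k$ gives a contribution $F(0)^{1/2}F(0)^{1/2}=O(1)$ to your bound on $\E\bigl(\sup_t\|\Gamma_t^{12}(j,k)\|\bigr)$, uniformly in $|j-k|$; summing over $j\in\Lambda_0$, $k\notin\Lambda_0$ then yields a volume law, not (\ref{AreaLaw}). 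What is missing is that the initial correlation matrix itself localizes: by simplicity of the spectrum, each block of $\Gamma_0$ is a spectral projection $\chi_{\Delta_{\alpha^{(k)}}}(M_k)$ of the block effective Hamiltonian, so it falls under the hypothesis (\ref{ecorloc}) and satisfies $\E\bigl(\sup_\alpha\|(\Gamma_0)_{pq}\|\bigr)\le F(|p-q|)$. Since all three random factors (two propagators and $\Gamma_0$) are correlated, the paper combines them by a three-fold H\"older inequality, each factor contributing $F^{1/3}$, and a convolution estimate
\begin{equation}
\sum_{z}\frac{1}{(1+|x-z|)^{\beta/3}}\,\frac{1}{(1+|z-y|)^{\beta/3}}\le \frac{D}{(1+|x-y|)^{\beta/3}}
\end{equation}
then gives decay $(1+|j-k|)^{-\beta/3}$ for the cross-cut entries; summing over $j,k$ requires $\beta/3>2$, i.e.\ $\beta>6$. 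Your Cauchy--Schwarz bookkeeping happens to land on the same numerical threshold, but by an argument that cannot be completed as stated.

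A secondary point: your reduction of the entropy to the off-diagonal block is also more complicated than necessary. You invoke $\eta(s)\sim 2s^2\log(1/s)$ and leave open how to ``absorb the logarithm'' into entrywise bounds; the paper avoids this entirely by using the elementary bound $-x\log x-(1-x)\log(1-x)\le 2\log 2\,\sqrt{x(1-x)}$ together with the Peierls--Bogoliubov inequality and the projection identity $(\Gamma_1(\idty-\Gamma_1))_{jj}=\sum_{k\in\Lambda\setminus\Lambda_0}\Gamma_{jk}(\Gamma_{jk})^*$, which yields a bound that is \emph{linear} in the norms $\|\Gamma_{\ell\ell'}\|$, with no logarithmic correction to control.
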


In general, we do not require the decomposition $\Lambda_1, \ldots, \Lambda_m$ to be compatible with the decomposition into $\Lambda_0$ and $\Lambda \setminus \Lambda_0$. However, if $\Lambda_0$ is chosen to be a union of adjacent $\Lambda_k$, say of $\Lambda_r, \ldots, \Lambda_s$, then the initial state is unentangled with respect to $\mathcal{H}_1 \otimes \mathcal{H}_2$:
\begin{equation}
\mathcal{E}(\rho_{t=0}) = \mathcal{E}(\rho) = \mathcal{S}(\rho_{\psi_r} \otimes \ldots \otimes \rho_{\psi_s}) = 0,
\end{equation}
as $\rho_{\psi_r} \otimes \ldots \otimes \rho_{\psi_s}$ is a pure state. For $t>0$, $\rho_t$ is generally not a product state and thus $(\rho_t)^1$ not a pure state, so that $\mathcal{E}(\rho_t)$ will be strictly positive (with upper bound given by the `volume law' $\log \dim \mathcal{H}_1 = |\Lambda_0| \log 2$).

The `area law' (\ref{AreaLaw}), giving an upper bound for the entanglement dynamics proportional to the surface area of the subsystem $\Lambda_0$ (with surface given by its two endpoints), is uniform not only in time $t\in \R$, the size of the system $\Lambda$, and the subsystem $\Lambda_0$, but also applies uniformly to {\it all} possible products of eigenstates of $H_{\Lambda_k}$ for $k=1,2,\ldots,m$, irrespective of their energy. As a special case one could choose a product of ground states of the $H_{\Lambda_k}$, but our result goes far beyond this and reflects the fact that the random XY chain is a model of a {\it fully many-body localized} quantum system.

It remains an open question if a uniform bound such as (\ref{AreaLaw}) on the dynamical entanglement holds also if general pure states $\psi_k$ in the subsystems $\mathcal{H}_{\Lambda_k}$ are used in the initial condition (\ref{eq:instate}), instead of eigenstates of $H_{\Lambda_k}$. Alternatively, such initial conditions might realize a slow temporal growth of the entanglement, at least up to saturation at the maximal entropy $|\Lambda_0| \log 2$ in $\Lambda_0$, as predicted in the physics literature to be possible in the MBL regime. At this point we do not have a good conjecture on what to expect.

We comment on the two extreme cases $m=n$ and $m=1$:

(i) In the extreme case when $m=n$ we have $\Lambda_k = \{k\}$ and $H_{\Lambda_k}=-\nu_{k}\sigma^Z_{k}$, so that all interaction terms in the XY chain have been removed.
For each $k$ we have $\sigma(H_{\Lambda_k})=\{\nu_{k},-\nu_{k}\}$, which is almost surely simple if the distribution of the $\nu_k$ does not have an atom at $0$.
The eigenvectors are $|\uparrow\rangle$ and $|\downarrow\rangle$, so that the corresponding initial states for Theorem~\ref{thm:DAL} become $\rho=|e_{\alpha}\rangle \langle e_{\alpha}|$ with arbitrary up-down-spin configurations $e_{\alpha}$ given by (\ref{eq:updown}).

Then Theorem \ref{thm:DAL} gives that eigencorrelator localization of the effective Hamiltonian $M$ implies an area law for the Schr\"odinger evolution $e^{-itH} |e_\alpha \rangle \langle e_\alpha| e^{itH}$ of arbitrary up-down-spin configurations. In fact, in this case the proof allows to slightly weaken the required decaying rate of the eigencorrelators:
\begin{cor}\label{cor:updown}
Assume that $H$ is almost surely simple and that $M$ has localized eigencorrelators with $F(r) = C/(1+r)^{\beta}$ for some $\beta>4$.

Then there exists $C<\infty$ such that
\begin{equation} \label{eq:arealaw}
\E \left( \sup_{t,\alpha} \mathcal{E}(e^{-itH} |e_\alpha \rangle \langle e_\alpha| e^{itH}) \right) \le C
\end{equation}
for all $n$, $a$ and $b$ with $1\le a \le b \le n$.
\end{cor}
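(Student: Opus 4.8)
The plan is to run the argument behind Theorem~\ref{thm:DAL} in the special situation $m=n$, where the extra structure of the initial state saves two powers in the decay exponent. The key observation is that for single-site blocks the initial state $|e_\alpha\rangle\langle e_\alpha|$ is a quasi-free (Gaussian) state whose single-particle correlation matrix $\Gamma(0)$ is \emph{block-diagonal}: in the doubled (Nambu) picture attached to the Jordan--Wigner map, $(\Gamma(0))_{zw}$ vanishes unless $z=w$, and each diagonal $2\times2$ block merely records whether site $z$ is occupied, so $\|(\Gamma(0))_{zz}\|\le1$ uniformly in $\alpha$. Since the XY dynamics is free-fermionic, $\rho_t=e^{-itH}|e_\alpha\rangle\langle e_\alpha|e^{itH}$ stays pure and quasi-free, and its correlation matrix evolves by conjugation with the one-particle propagator, $\Gamma(t)=e^{itM}\Gamma(0)e^{-itM}$, with $M$ as in (\ref{effHam}).

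First I would reduce the entanglement to a boundary sum over the correlation matrix. Because $\rho_t$ is pure and quasi-free, $\Gamma(t)$ is an orthogonal projection, so writing $P=P_{\Lambda_0}$, $P^\perp=I-P$ and $\Gamma_A=P\Gamma(t)P$ one has the identity $\Gamma_A(I-\Gamma_A)=P\Gamma(t)P^\perp\Gamma(t)P=BB^*$ with $B:=P\Gamma(t)P^\perp$. The entanglement is $\mathcal{E}(\rho_t)=\sum_i\eta(\lambda_i)$, where $\eta(\lambda)=-\lambda\log\lambda-(1-\lambda)\log(1-\lambda)$ and the $\lambda_i$ are the eigenvalues of $\Gamma_A$; using $\eta(\lambda)\le c\sqrt{\lambda(1-\lambda)}$ together with $\sum_i\sqrt{\lambda_i(1-\lambda_i)}=\|B\|_1$ gives
\[
\mathcal{E}(\rho_t)\ \le\ c\,\|P_{\Lambda_0}\Gamma(t)P_{\Lambda_0^c}\|_1\ \le\ C\sum_{x\in\Lambda_0}\sum_{y\notin\Lambda_0}\|\Gamma_{xy}(t)\|.
\]
This is the one-dimensional area-law mechanism of \cite{PasturSlavin,AR-S}: only blocks straddling the boundary of $\Lambda_0$ contribute.

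Next I would estimate $\E(\sup_{t,\alpha}\|\Gamma_{xy}(t)\|)$ for fixed $x\in\Lambda_0$, $y\notin\Lambda_0$. Since $\Gamma(0)$ is block-diagonal, $\Gamma_{xy}(t)=\sum_z(e^{itM})_{xz}(\Gamma(0))_{zz}(e^{-itM})_{zy}$, and pulling the supremum through the nonnegative product while using $\|(\Gamma(0))_{zz}\|\le1$ yields
\[
\sup_{t,\alpha}\|\Gamma_{xy}(t)\|\ \le\ \sum_z\Big(\sup_{|g|\le1}\|g(M)_{xz}\|\Big)\Big(\sup_{|g|\le1}\|g(M)_{zy}\|\Big).
\]
Because each factor is at most $1$, its square is bounded by itself, so Cauchy--Schwarz together with eigencorrelator localization (\ref{ecorloc}) gives $\E[\,\cdots\,]\le\sqrt{F(|x-z|)F(|z-y|)}$, whence $\E(\sup_{t,\alpha}\|\Gamma_{xy}(t)\|)\le\sum_z\sqrt{F(|x-z|)F(|z-y|)}$. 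With $F(r)=C/(1+r)^\beta$ the right-hand side is the self-convolution of $(1+r)^{-\beta/2}$, which for $\beta>2$ decays like $C(1+|x-y|)^{-\beta/2}$.

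Combining the two previous steps,
\[
\E\Big(\sup_{t,\alpha}\mathcal{E}(\rho_t)\Big)\ \le\ C\sum_{x\in\Lambda_0}\sum_{y\notin\Lambda_0}(1+|x-y|)^{-\beta/2}.
\]
The hard part, and the place the exponent threshold is decided, is the uniform convergence of this boundary double sum. Splitting at the two endpoints of $\Lambda_0=[a,b]$ and substituting $|x-y|=i+j+1$ with $i,j\ge0$ reduces each endpoint contribution to $\sum_{k\ge0}(k+1)(2+k)^{-\beta/2}$, which converges -- uniformly in $a,b,n$ -- exactly when $\beta/2>2$, i.e.\ $\beta>4$, yielding the claimed constant bound. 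Thus the whole estimate is a matter of bookkeeping the decay so that the final geometric boundary sum is summable: the diagonality of $\Gamma(0)$ removes one internal summation present for the general product-of-eigenstates initial states of Theorem~\ref{thm:DAL}, which is precisely why the requirement can be relaxed from $\beta>6$ to $\beta>4$.
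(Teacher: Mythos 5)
Your proof is correct and follows essentially the same route as the paper: you exploit that the initial correlation matrix is block-diagonal with norm-one $2\times2$ blocks (the paper's $N_\alpha$), reduce the entanglement to the boundary sum $\sum_{x\in\Lambda_0}\sum_{y\notin\Lambda_0}\|\Gamma_{xy}(t)\|$ via the projection property and the $\sqrt{x(1-x)}$ bound, and then use Cauchy--Schwarz (in place of the three-factor H\"older of Theorem~\ref{thm:DAL}) plus the convolution estimate to get decay $(1+|x-y|)^{-\beta/2}$, whence $\beta>4$ suffices. This is exactly the simplification sketched at the end of Section~\ref{sec:DAL}, including the correct identification of why one power-counting factor is saved relative to the $\beta>6$ case.
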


We comment at the end of Section~\ref{sec:DAL} how the proof of Theorem~\ref{thm:DAL} simplifies in this case, so that $\beta>4$ suffices.

(ii) In the other extreme case when $m=1$. In this case (\ref{eq:instate}) simply means that $\rho=\rho_{\psi}$ for a normalized eigenstate $\psi$ of $H$. As these states are stationary under the time-evolution, $\rho_t = \rho$, Theorem~\ref{thm:DAL}, respectively its proof, yields the main result of \cite{AR-S} as a special case:

\begin{cor}[\cite{AR-S}] \label{cor:AR-S}
Assume that $H$ is almost surely simple and that $M$ has localized eigencorrelators with $F(r) = C/(1+r)^{\beta}$ for some $\beta>2$. There exists $C<\infty$ such that
\begin{equation}
\mathbb{E}\left(\sup_{\psi}\mathcal{E}(|\psi\rangle \langle\psi|)\right) \le C,
\end{equation}
 where the supremum is taken over all eigenvectors $\psi$ of $H$.
 \end{cor}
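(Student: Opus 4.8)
Since $\rho=|\psi\rangle\langle\psi|$ is an eigenstate of $H$ it is stationary, $\rho_t=\rho$, so Corollary~\ref{cor:AR-S} is the $m=1$ instance of Theorem~\ref{thm:DAL}; but the extra machinery needed for the dynamical statement costs powers of $\beta$, and treating a single eigenstate of the full chain directly recovers the sharp threshold $\beta>2$, which is the route I would take. The plan is to pass to free fermions. Apply the Jordan--Wigner transformation of \cite{LSM} to write $H$ as a quadratic expression in $a_j,a_j^*$, diagonalized by the self-adjoint effective one-particle Hamiltonian $M$ from (\ref{effHam}). Under the almost-sure simplicity assumption every eigenvector $\psi$ of $H$ is, up to a phase, an occupation-number eigenstate of the Bogoliubov quasi-particles and hence a \emph{quasi-free} state; this is exactly where (\ref{eq:simple}) enters, since a degenerate eigenspace would contain non-quasi-free linear combinations. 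The self-dual two-point correlation matrix of such a state then has the explicit form $\Gamma^\psi=g_\psi(M)$, where $g_\psi$ is the $\{0,1\}$-valued indicator of the occupied modes; in particular $|g_\psi|\le1$, and writing $\Gamma^\psi$ as an $n\times n$ array of $2\times2$ blocks, $\Gamma^\psi_{jk}=g_\psi(M)_{jk}$. This identity is what couples the entanglement to eigencorrelator localization (\ref{ecorloc}).

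Next I would express the entanglement through the compressed correlation matrix. For a quasi-free state $\rho^1=\Tr_{\mathcal{H}_2}\rho$ is again quasi-free and $\mathcal{E}(\rho)=\mathcal{S}(\rho^1)=\tr\,\eta(\Gamma^\psi_{\Lambda_0})$, where $\Gamma^\psi_{\Lambda_0}$ is the compression of $\Gamma^\psi$ to the sites of $\Lambda_0=[a,b]$ and $\eta(\lambda)=-\lambda\log\lambda-(1-\lambda)\log(1-\lambda)$. Since the full state is pure, $\Gamma^\psi$ is an orthogonal projection, and its block decomposition with respect to $\Lambda_0$ and its complement gives $\Gamma^\psi_{\Lambda_0}(\idty-\Gamma^\psi_{\Lambda_0})=BB^*$, where $B$ is the off-diagonal block linking $\Lambda_0$ to its complement, $B_{jk}=\Gamma^\psi_{jk}$ with $j\in\Lambda_0$, $k\notin\Lambda_0$. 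Using the elementary bound $\eta(\lambda)\le c_s(\lambda(1-\lambda))^s$, valid for every fixed $s\in(0,1)$, and that the eigenvalues of $\Gamma^\psi_{\Lambda_0}(\idty-\Gamma^\psi_{\Lambda_0})$ are the squared singular values of $B$, one gets $\mathcal{S}(\rho^1)\le c_s\tr[(BB^*)^s]=c_s\|B\|_{2s}^{2s}$. Setting $p:=2s\in(0,1)$ and invoking the quasi-triangle inequality for the Schatten-$p$ quasinorm (valid for $p\le1$) blockwise, this yields
\begin{equation*}
\mathcal{S}(\rho^1)\le c\sum_{j\in\Lambda_0}\sum_{k\notin\Lambda_0}\|\Gamma^\psi_{jk}\|^p .
\end{equation*}
This reduction of the von Neumann entropy to a $p$-power sum of boundary correlation blocks is the analytic heart of the argument.

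Finally I would insert eigencorrelator localization. For each fixed disorder realization, $\sup_\psi\|\Gamma^\psi_{jk}\|^p=\bigl(\sup_\psi\|g_\psi(M)_{jk}\|\bigr)^p\le\bigl(\sup_{|g|\le1}\|g(M)_{jk}\|\bigr)^p$, so that $\sup_\psi\mathcal{E}(|\psi\rangle\langle\psi|)$ is bounded by the corresponding double sum. Taking expectations and moving $\E$ through the concave map $x\mapsto x^p$ ($p<1$) by Jensen's inequality, (\ref{ecorloc}) gives
\begin{equation*}
\E\Bigl(\sup_\psi\mathcal{E}(|\psi\rangle\langle\psi|)\Bigr)\le c\sum_{j\in\Lambda_0}\sum_{k\notin\Lambda_0}F(|j-k|)^p .
\end{equation*}
Because $\Lambda_0=[a,b]$ is an interval, the double sum splits into the contributions of its two endpoints, each dominated by $\sum_{d\ge1}d\,F(d)^p$; with $F(r)=C/(1+r)^\beta$ this behaves like $\sum_d d^{\,1-\beta p}$, which converges provided $\beta p>2$. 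The threshold $\beta>2$ is then reached by choosing any $p\in(2/\beta,1)$, which is possible exactly when $\beta>2$, and the resulting bound is independent of $n$, $a$ and $b$. The step demanding the most care is this last optimization together with the quasi-free reduction: one must verify that \emph{every} eigenstate's correlation matrix is genuinely of the form $g_\psi(M)$ with $|g_\psi|\le1$ so that (\ref{ecorloc}) applies uniformly in $\psi$, and then balance the entropy exponent $s$ against the decay exponent $\beta$ to extract the sharp constant-in-$n$ bound.
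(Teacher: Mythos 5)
Your proof is correct, and structurally it is the same proof as the paper's: the paper obtains Corollary~\ref{cor:AR-S} by specializing the proof of Theorem~\ref{thm:DAL} to $m=1$, where stationarity makes the time evolution drop out, $\Gamma=e^{-2itM}\chi_{\Delta_\alpha}(M)e^{2itM}=\chi_{\Delta_\alpha}(M)$, so that, exactly as you argue, almost-sure simplicity gives quasi-freeness of every eigenstate, the correlation matrix is a spectral projection of $M$ (hence of the form $g(M)$ with $|g|\le 1$, so (\ref{ecorloc}) applies uniformly in $\psi$), the projection identity $\Gamma_1(\idty-\Gamma_1)=BB^*$ reduces the entanglement to the off-diagonal blocks $\Gamma_{\ell\ell'}$, $\ell\in\Lambda_0$, $\ell'\in\Lambda\setminus\Lambda_0$, and summing over the two boundary components of the interval gives a constant bound precisely when $\beta>2$. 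The one step you do differently is the passage from the entropy to those blocks: the paper (following \cite{AR-S,PasturSlavin}) uses $\eta(x)\le 2\log 2\,\sqrt{x(1-x)}$, the Peierls--Bogoliubov inequality and $\sqrt{x}+\sqrt{y}\le\sqrt{2}\sqrt{x+y}$ to reach the first-power bound (\ref{PSbound}), $\mathcal{E}\le 2\log 2\sum_{\ell\in\Lambda_0}\sum_{\ell'\in\Lambda\setminus\Lambda_0}\|\Gamma_{\ell\ell'}\|$, which requires $\sum_d d\,F(d)<\infty$, i.e.\ $\beta>2$; you instead use $\eta(x)\le c_s(x(1-x))^s$, the Schatten-$p$ quasi-triangle inequality with $p=2s<1$, and Jensen's inequality to push $\E$ inside the $p$-th power, which requires $\beta p>2$ for some admissible $p\le 1$ --- again exactly $\beta>2$. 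So your Schatten/Jensen detour is sound but buys nothing over the paper's route here (at $p=1$, where the quasinorm is the trace norm and Jensen is unnecessary, your estimate essentially collapses to the paper's), and it loses nothing either. Two cosmetic points: with the $2n\times 2n$ self-dual convention of Lemma~\ref{lem:AR-S}, the entanglement is $-\tr\Gamma_1\log\Gamma_1=\frac{1}{2}\tr\eta(\Gamma_1)$ rather than $\tr\eta(\Gamma_1)$ (harmless, since you bound the larger quantity), and your blockwise quasi-triangle step should note $\|B_{jk}\|_p^p\le 2\|B_{jk}\|^p$ for the $2\times 2$ blocks, an innocuous constant.
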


Again, we will comment on the proof of this at the end of Section~\ref{sec:DAL}.

Let us briefly discuss the validity of our main assumption, i.e.\ eigencorrelator localization (\ref{ecorloc}), in the above theorems. Here we consider the best understood case of random magnetic field $\{\nu_j\}$ and constant parameters $\mu = \mu_j$, $\gamma = \gamma_j$. Eigencorrelator localization with exponential decay, i.e.\ $F(r) = Ce^{-r/\xi}$, is known for the following cases: (i) The isotropic case $\gamma =0$, where $M$ reduces to the classical Anderson model (see, e.g., the survey \cite{Stolz}). (ii) The general anisotropic model at large disorder, i.e. $\nu_j = \lambda \tilde{\nu}_j$, where $\lambda$ is sufficiently large and $\tilde{\nu}_j$ i.i.d.\ with bounded compactly supported density \cite{Elgartetal}.

Finally, eigencorrelator localization with subexponential decay $F(r) = Ce^{-\eta r^{\xi}}$, $\xi<1$, has been shown in \cite{ChapmanStolz} for the general anisotropic case under the assumption of a zero-energy spectral gap for $M$, i.e.\ the existence of $C> |\mu|$ such that $\nu_j \ge C$ for all $j$ or $\nu_j \le -C$ for all $j$. This only requires non-triviality of the i.i.d.\ random variables $\nu_j$ (i.e.\ their distribution is supported on more than one point). This suffices for Theorem~\ref{thm:PNF}. However, in Theorem~\ref{thm:DAL} we also need almost sure simplicity of the spectrum of $H$. This is shown in Appendix A of \cite{AR-S} under the assumption that the $\eta_j$ have absolutely continuous distribution.

The rest of this paper is organized as follows. In Section~\ref{sec:effham}, we briefly recall some of the basics
concerning the diagonalization of the XY spin chain. Section~\ref{sec:pnf} contains the short proof of Theorem~\ref{thm:PNF} on particle number transport. The proof of Theorem~\ref{thm:DAL} in Section~\ref{sec:DAL} is more involved and needs additional tools.  The general framework for this is the theory of quasi-free states in free Fermion systems and how they can be studied in terms of correlation matrices. In particular, we recall the formula (\ref{eq:EE:Wick}), expressing the entanglement of states in terms of restrictions of correlation matrices. This goes back to \cite{VLRK} and was proven in \cite{AR-S} in the generality required here. Some more general facts about the CAR algebra and quasi-free states which are used in this article are collected in Section~\ref{app:carqf}. This includes a proof of the general fact that products of quasi-free states are quasi-free.

%
%
%

\section{On the Effective Single-Particle Hamiltonian}\label{sec:effham}

One of the most important features of the XY spin chain is that the many-body Hamiltonian, which acts on
a Hilbert space of dimension $2^n$, can be reduced to an effective single-particle operator that acts on a space
with dimension linear in $n$. We briefly introduce some notation associated with this well-known result.

%

Consider the XY Hamiltonian $H$, as in (\ref{eq:anisoxychain}), on the interval $[1,n]$.
In terms of the lowering operator, see (\ref{numaa*}), and the Pauli matrix $\sigma^z$ define
a collection of $c$-operators by setting
\begin{equation} \label{eq:cjs}
c_1 := a_1 \quad \mbox{and} \quad  c_j := \sigma_1^z \ldots \sigma_{j-1}^z a_j \quad \mbox{for } j= 2,\ldots,n.
\end{equation}
This {\it change of variables} is known as the Jordan-Wigner transform.
These non-local operators are easily checked to satisfy the canonical anti-commutation relations (CAR):
\begin{equation} \label{CAR}
\{ c_j, c_k \} = \{ c_j^*, c_k^* \} = 0 \quad \mbox{and} \quad \{c_j, c_k^* \} = \delta_{j,k} \idty \quad \mbox{for all } 1 \leq j,k \leq n.
\end{equation}
Crucial for the theory of the XY Hamiltonian $H$ is that it can be expressed as a quadratic form in the operators $c_j$ and their adjoints. The coefficient matrices appearing in these representations provide effective one-particle Hamiltonians for $H$. Below we present these representations separately for the isotropic and anisotropic case. The former is simpler and most suitable for proving Theorem~\ref{thm:PNF} in Section~\ref{sec:pnf}, while the latter is needed for the proof of Theorem~\ref{thm:DAL} in Section~\ref{sec:DAL}.

\subsection{Isotropic Case}
Let $H_{\text{iso}}$ denote the isotropic XY chain, i.e.\ the special case of $H$ with $\gamma_j=0$ for all $j$.
One finds that the Hamiltonian $H_{\text{iso}}$ can be rewritten as
\begin{equation}\label{eq:isotropicH}
H_{\text{iso}}=2c^* A c + E_0 \I.
\end{equation}
Here $c := (c_1,\ldots, c_n)^T$, $c^* = (c_1^*, \ldots, c_n^*)$,  $E_0 := \sum_j \nu_j$ and $A$ is the symmetric Jacobi matrix
\begin{equation} \label{eq:A}
A= (A_{jk})_{j,k=1}^n :=  \left( \begin{matrix} -\nu_1 & \mu_1 & & \\ \mu_1 & \ddots & \ddots & \\ & \ddots & \ddots & \mu_{n-1} \\ & & \mu_{n-1} & -\nu_n \end{matrix} \right).
\end{equation}
There is an orthogonal matrix $U$ such that
\begin{equation} \label{eq:diagA}
U A U^t = \tilde{\Lambda} = \mbox{diag}(\tilde{\lambda}_j),
\end{equation}
where $\tilde{\lambda}_j$, $j=1,\ldots,n$, are the eigenvalues of $A$. Define $\tilde{b}=(\tilde{b}_1,\ldots, \tilde{b}_n)^T$ by
\begin{equation} \label{eq:bdef}
\tilde{b}= Uc.
\end{equation}
By the bi-linearity of $\{\cdot,\cdot\}$ and orthogonality of $U$ it follows readily that $\tilde{b}_j$, $j=1,\ldots,n$ satisfies the CAR. Moreover,
\begin{eqnarray} \label{eq:Fermi:Iso}
H_{\text{iso}} & = & 2c^* U^t \tilde{\Lambda} U c +E_0 \I  = 2 \tilde{b}^* \tilde{\Lambda} \tilde{b} +E_0 \I \nonumber \\ & = & 2\sum_{j=1}^n \tilde{\lambda}_j \tilde{b}_j^* \tilde{b}_j + E_0 \I.
\end{eqnarray}
Thus $H_{\text{iso}}$ has been written in the form of a {\it free Fermion system}.

\subsection{Anisotropic Case}
The XY Hamiltonian in its general anisotropic form (\ref{eq:anisoxychain}) can be written as
\begin{equation} \label{eq:Crep}
H =   \mathcal{C}^* M \mathcal{C},
\end{equation}
where we now use a vector notation for the collections
\begin{equation} \label{JWsystem}
\mathcal{C}= (c_1, c_1^*,c_2,c_2^*, \ldots, c_n, c_n^*)^T \quad \mbox{and similarly} \quad \mathcal{C}^*= (c_1^*,c_1  \ldots, c_n^*, c_n).
\end{equation}
The $2n\times 2n$ coefficient matrix
\begin{equation} \label{effHam}
M := \left( \begin{array}{cccc} -\nu_1 \sigma^z & \mu_1 S(\gamma_1) & & \\ \mu_1 S(\gamma_1)^T & -\nu_2 \sigma^z & \ddots & \\ & \ddots & \ddots & \mu_{n-1} S(\gamma_{n-1}) \\ & & \mu_{n-1} S(\gamma_{n-1})^T & -\nu_{n} \sigma^z \end{array} \right),
\end{equation}
where
\begin{equation}
S(\gamma) = \begin{pmatrix} 1 & \gamma \\ -\gamma & -1 \end{pmatrix}.
\end{equation}
It is well know that $M$ is diagonalizable by a (real-valued) Bogoliubov matrix $W$,
\begin{equation} \label{eq:Mdiag}
W M W^t=\bigoplus_{j=1}^n\begin{pmatrix}
                           \lambda_j & 0 \\
                           0 & -\lambda_j \\
                         \end{pmatrix},
\end{equation}
where $0\leq\lambda_1\leq\lambda_2\leq\ldots\leq\lambda_n$.
The Bogoliubov matrix $W$ defines the Fermionic system
\begin{equation}\label{cov}
\mathcal{B}:=W\mathcal{C}=(b_1,b_1^*,b_2,b_2^*,\ldots,b_n,b_n^*)^T,
\end{equation}
and
\begin{eqnarray} \label{freefermi}
H & = & \mathcal{C}^* M \mathcal{C} = \mathcal{B}^* W M W^t \mathcal{B} = \mathcal{B}^* \bigoplus_{j=1}^n\left( \begin{array}{cc} \lambda_j & 0 \\ 0 & -\lambda_j \end{array} \right) \mathcal{B}  \\
& = & \sum_{j=1}^n \lambda_j (b_j^* b_j - b_j b_j^*)
 =  2 \sum_{j=1}^n \lambda_j b_j^* b_j - E_1 \I. \nonumber
\end{eqnarray}
where $E_1:=\sum_{j=1}^n \lambda_j$. Thus, as in the isotropic case treated above, the Hamiltonian has been written as a free Fermion system. We note that the main difference between the isotropic and anisotropic cases is that, as opposed to (\ref{eq:bdef}), the Bogoliubov transform (\ref{cov}) generally `mixes' the $c$ and $c^*$ modes.

Up to a phase, the free Fermion system has a unique normalized vacuum vector $\Omega \in \bigcap_j \mbox{ker}\,b_j$,
and the collection
\begin{equation}\label{eq:PsiAlpha}
\psi_{\alpha} = (b_1^*)^{\alpha_1} \ldots (b_n^*)^{\alpha_n} \Omega, \quad \alpha =(\alpha_1,\alpha_2,\ldots,\alpha_n) \in \{0,1\}^n,
\end{equation}
forms an orthonormal basis (ONB) of the Hilbert space $\mathcal{H} = \bigotimes_{j=1}^n \C^2$.
In fact, the $\psi_{\alpha}$ form a complete set of eigenvectors for $H$ with
corresponding eigenvalues: $E_{\alpha} = 2\sum_{j:\alpha_j=1} \lambda_j - \sum_{j=1}^n \lambda_j$.

We may also consider the $c$-operators as associated with a free Fermion system. In this case, the vacuum vector in the kernels of all $c_j$ is $e_0$, the all-spins down vector, and the basis $(c_1^*)^{\alpha_1} \ldots (c_n^*)^{\alpha_n} e_0$, $\alpha \in \{0,1\}^n$, consists of the up-down spin product vectors $e_{\alpha}$ (up to a sign). It is in this sense that up-spins are interpreted as particles associated with the Fermionic creation operators $c_j^*$.

%
%

\section{Particle Number Transport} \label{sec:pnf}

In this section, we prove Theorem \ref{thm:PNF}.
The quantity of interest here is
\begin{equation} \label{tenum}
\langle \mathcal{N}_{S} \rangle_{\rho_t} = \sum_{j \in S} \langle a_j^* a_j \rangle_{\rho_t}
= \sum_{j \in S} \langle c_j^* c_j \rangle_{\rho_t}=\sum_{j \in S} \langle \tau_t(c_j^* c_j) \rangle_{\rho}=\sum_{j \in S} \langle \tau_t(c_j^*) \tau_t(c_j) \rangle_{\rho}
\end{equation}
where we have used that this product of raising and lowering operators is equal to the identical product of the
corresponding Jordan-Wigner variables, see (\ref{eq:cjs}), i.e., $a_j^*a_j = c_j^* c_j$ for all $1 \leq j \leq n$.

Since the Hamiltonian generating the dynamics $H_{\text{iso}}$ has a particularly simple form in terms of the $\tilde{b}$-operators, see (\ref{eq:Fermi:Iso}), one finds that
\begin{equation}
\tau_t(\tilde{b}_j) = e^{-2it \tilde{\lambda}_j} \tilde{b}_j.
\end{equation}
In vector form, this can be expressed as $\tau_t(\tilde{b})=e^{-2it\tilde{\Lambda}}\tilde{b}$. This implies
\begin{equation}
 \tau_t(c)=e^{-2itA}c
\end{equation}
as a consequence of the change of variables (\ref{eq:bdef}). Thus we get
\begin{equation}\label{eq:PN:Edynmcs}
\langle \tau_t(c_j^*) \tau_t(c_j) \rangle_{\rho}=\sum_{k,\ell=1}^n \overline{\left(e^{-2itA}\right)_{j\ell}}\left(e^{-2itA}\right)_{jk}\langle c_\ell^* c_k\rangle_{\rho}.
\end{equation}
It is easy to see, with the specific initial density matrix $\rho = \bigotimes_j \rho_j$ given in (\ref{densityprofile}), that
\begin{equation}\label{eq:PN:ccrho}
\langle c_\ell^* c_k\rangle_{\rho}=\delta_{\ell,k}\,\eta_k.
\end{equation}
By substituting (\ref{eq:PN:ccrho}) in (\ref{eq:PN:Edynmcs}) and then in (\ref{tenum}), we find
\begin{equation} \label{eq:PNS}
\langle \mathcal{N}_{S} \rangle_{\rho_t}=\sum_{j\in S}\sum_{k=1}^{n}\eta_k \left|(e^{2itA})_{jk}\right|^2.
\end{equation}
Thus we can use eigencorrelator localization (\ref{ecorloc}) to find that
\begin{eqnarray}
\mathbb{E} \left( \sup_t \langle \mathcal{N}_{S} \rangle_{\rho_t}  \right) & \leq &  \sum_{j\in S} \sum_{k=1}^n \eta_k \,\E \left(\sup_t |(e^{2itA})_{j,k}| \right) \\
& \le & \sum_{j\in S} \sum_{k=1}^n \eta_k F(|j-k|), \nonumber
\end{eqnarray}
as claimed.

We now comment on extensions of the above result which provide analogues to corresponding bounds for the disordered Tonks-Girardeau gas, see Proposition~2.1 in \cite{SW}. First one can show, with exactly the same notations as in Theorem~\ref{thm:PNF} above, that
\begin{equation} \label{eq:SWbound1}
\E \left( \sup_t | \langle \mathcal{N}_S \rangle_{\rho_t} - \langle \mathcal{N}_S \rangle_{\rho}| \right) \le 2 \sum_{j\in S, k\in S^c} F(|j-k|).
\end{equation}
To see this, we mimic notation used in \cite{SW} to write the expression found in (\ref{eq:PNS}) as
\begin{equation}
\left| \langle \mathcal{N}_{S} \rangle_{\rho_t} - \langle \mathcal{N}_S \rangle_{\rho} \right| = \left| \tr \chi_S U_t^* \eta U_t - \tr \chi_S \eta \right|,
\end{equation}
where $\eta = \mbox{\rm{diag}}(\eta_k)$ and $U_t = \exp(2itA)$. Further retracing \cite{SW} the latter is equal to
\begin{eqnarray}
\left| \tr \chi_S U_t^* \chi_{S^c} \eta U_t - \tr \chi_{S^c} U_t^* \chi_S \eta U_t \right| & \le & \| \chi_S U_t^* \chi_{S^c}\|_1 \|\eta U_t\| + \|\chi_{S^c} U_t^* \chi_S\|_1 \|\eta U_t\| \\
& \le & \sum_{j\in S, \,k\in S^c} |(U_t^*)_{jk}| + \sum_{j\in S^c,\, k\in S} |(U_t^*)_{jk}|. \nonumber
\end{eqnarray}
This yields (\ref{eq:SWbound1}). Similarly, and omitting details, one adapts Propositon~2.1.2 of \cite{SW} to get
\begin{equation} \label{eq:SWbound2}
\E \left( \sup_t \langle \mathcal{N}_S \rangle_{\rho_t} \right) \le \langle \mathcal{N}_K \rangle_{\rho} + \sum_{j\in S,\, k\in K^c} F(|j-k|)
\end{equation}
for arbitrary pairs of subsets $S \subset K \subset \Lambda$. In the case of sufficiently decaying $F$ and, say, intervals $S$ and $K$, (\ref{eq:SWbound1}) and (\ref{eq:SWbound2}) can be interpreted as absence/smallness of particle transport for general density profiles $\rho$ as in (\ref{densityprofile}). The application (\ref{eq:noPT}) of Theorem~\ref{thm:PNF}, on the other hand, required to assume that $S$ is initially void of particles.

%
%

\section{The Dynamical Entanglement Entropy of a Product of Eigenstates} \label{sec:DAL}

%
%

In this section we prove our second main result, Theorem~\ref{thm:DAL}. Our goal is to analyze the dynamical entanglement entropy $\mathcal{E}( \rho_t)$ of
 the state with initial density matrix $\rho = \bigotimes_{k=1}^m \rho_{\psi_k}$.

 We start by noting that the simplicity assumption (\ref{eq:simple}) requires that, almost surely, $H_{[1,n]}$ is simple for {\it all} $n\in \N$. Since we have also assumed that the random parameters entering $H$ are i.i.d., this yields by translation invariance that all $H_{\Lambda_k}$, $k=1,\ldots,m$, are simple, again almost surely.

 In Section~\ref{app:carqf} we recall that, under the assumption that $H$ has simple spectrum, all density matrices $\rho_{\psi} = |\psi \rangle \langle \psi|$ corresponding to eigenstates $\psi$ of $H$ are quasi-free states on $\mathcal{B}(\mathcal{H})$ (also see Section~\ref{app:carqf} for a definition). The same argument applies to restrictions of the XY chain to the subintervals $\Lambda_k$ of $\Lambda$.
 Thus, almost surely, the density matrices $\rho_{\psi_k}$, corresponding to eigenstates of $H_{\Lambda_k}$, define quasi-free states on $\mathcal {B}(\mathcal{H}_{\Lambda_k})$ for all $k=1,\ldots,m$. Here, on a subinterval $\Lambda_k$ one defines quasi-free with respect to `local' Jordan-Wigner operators $c_j^{(k)}$, $j \in \Lambda_k$, given similar to (\ref{eq:cjs}). We will later refer to
 \begin{equation}
 \mathcal{C}^{(k)} := (c_1^{(k)},(c_1^{(k)})^*, \ldots, c_{|\Lambda_k|}^{(k)}, (c_{|\Lambda_k|}^{(k)})^*)
 \end{equation}
  as the {\it local Jordan-Wigner system} on $\Lambda_k$.

 By iterated application of Lemma~\ref{lem:addqf}, we conclude that
 $\rho$ is a density matrix corresponding to a quasi-free state on $\mathcal{B}(\mathcal{H})$. Finally, by Lemma~\ref{lem:qftime}, we find that $\rho_t$ is quasi-free for all times.
 Hence we are analyzing the entanglement entropy of
 a quasi-free state.

 The importance of this lies in the fact that quasi-free states $\rho$ are completely characterized by their correlation matrices $\Gamma_{\rho}^{\mathcal C}$ defined as
 \begin{equation} \label{eq:cormatrix}
\Gamma_{\rho}^{\mathcal{C}} : = \langle \mathcal{C} \mathcal{C}^* \rangle_{\rho}
\end{equation}
i.e.\ the $2n \times 2n$ complex matrix with entries $( \Gamma_{\rho}^{\mathcal{C}})_{j,k} = \langle (\mathcal{C} \mathcal{C}^*)_{j,k} \rangle_{\rho}$. Moreover, the von Neumann entropy of a reduced state $\rho^1$ is determined by a restriction of the correlation matrix.

\begin{lem}{\cite{AR-S}} \label{lem:AR-S}
Let $\rho$ be a quasi-free state on $\mathcal{B}(\mathcal{H})$. The entanglement entropy of $\rho$ with respect to the decomposition $\mathcal{H}_1\otimes \mathcal{H}_2$ in (\ref{bipd}) is given by the formula:
\begin{equation}\label{eq:EE:Wick}
  \mathcal{E}(\rho) = -\Tr \rho^1 \log \rho^1 = - \tr \Gamma_{\rho^1}^{\mathcal{C}_1} \log \Gamma_{\rho^1}^{\mathcal{C}_1}
\end{equation}
where $\mathcal{C}_1$ is the local Jordan-Wigner system on $\Lambda_0$. Moreover, the correlation matrix $\Gamma_{\rho^1}^{\mathcal{C}_1}$ of $\rho^1$ is the restriction of $\Gamma_\rho^{\mathcal{C}}$ to $\Span\{e_{2j-1},e_{2j},j\in\Lambda_0\}$.
\end{lem}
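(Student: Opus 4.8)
The plan is to establish the two assertions in turn: first the ``moreover'' statement identifying $\Gamma_{\rho^1}^{\mathcal{C}_1}$ as a submatrix of $\Gamma_\rho^{\mathcal{C}}$, and then the entropy formula (\ref{eq:EE:Wick}) itself. The key algebraic observation throughout is the relation between the global and local Jordan--Wigner operators on $\Lambda_0 = [a,b]$. By (\ref{eq:cjs}) one has $c_j = P\,c_j^{(1)}$ for each $j \in \Lambda_0$, where $P := \sigma_1^z \cdots \sigma_{a-1}^z$ is the string operator supported on $[1,a-1] \subset \Lambda \setminus \Lambda_0$, and hence acts only on $\mathcal{H}_2$. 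Since $P$ is a self-adjoint unitary with $P^2 = \idty$ that commutes with every operator supported on $\mathcal{H}_1$, in each of the four quadratic monomials $c_j c_k$, $c_j^* c_k$, $c_j c_k^*$, $c_j^* c_k^*$ with $j,k \in \Lambda_0$ the two copies of $P$ cancel, leaving the corresponding monomial in the local operators $c_j^{(1)}$, which is supported entirely on $\mathcal{H}_1$. By the defining property of the partial trace, the $\rho$-expectation of any operator supported on $\mathcal{H}_1$ equals its $\rho^1$-expectation. Applied to these monomials, this yields $(\Gamma_\rho^{\mathcal{C}})_{jk} = (\Gamma_{\rho^1}^{\mathcal{C}_1})_{jk}$ on the index block $\{2j-1,2j : j \in \Lambda_0\}$, which is the asserted restriction property.

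Next I would show that $\rho^1$ is itself quasi-free with respect to $\mathcal{C}_1$. The operators $c_j^{(1)}$, $j \in \Lambda_0$, generate all of $\mathcal{B}(\mathcal{H}_1)$, and the same cancellation shows that the $\rho^1$-expectation of any even monomial in the $c_j^{(1)}$ coincides with the $\rho$-expectation of the corresponding monomial in the $c_j$ (odd monomials vanish on both sides). Thus the Wick/pairing identities characterizing the quasi-free state $\rho$ descend verbatim to the $\mathcal{C}_1$-correlation functions, so $\rho^1$ is quasi-free; this is an instance of the general fact that the restriction of a quasi-free state to the CAR subalgebra generated by a subset of modes is again quasi-free.

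With $\rho^1$ quasi-free and $\Gamma := \Gamma_{\rho^1}^{\mathcal{C}_1}$ its correlation matrix, the final step is the Gaussian entropy formula $\mathcal{S}(\rho^1) = -\tr \Gamma \log \Gamma$. Here I would produce a Bogoliubov transformation of the $|\Lambda_0|$ local modes that diagonalizes $\Gamma$ and simultaneously factorizes $\rho^1$ into a tensor product of single-mode states; the CAR force the $2|\Lambda_0|$ eigenvalues of $\Gamma$ to occur in pairs $\{\lambda_i, 1-\lambda_i\}$, and the $i$-th mode contributes the binary entropy $-\lambda_i \log \lambda_i - (1-\lambda_i)\log(1-\lambda_i)$. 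Summing over modes and using the convention $0 \log 0 := 0$ gives $\mathcal{S}(\rho^1) = -\tr \Gamma \log \Gamma$, which combined with $\mathcal{E}(\rho) = \mathcal{S}(\rho^1)$ is exactly (\ref{eq:EE:Wick}).

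I expect this last step to be the main obstacle. The string-operator cancellation makes the first two steps essentially algebraic once observed, whereas the Gaussian entropy identity requires showing that the Bogoliubov diagonalization of the correlation matrix is implemented by a unitary carrying $\rho^1$ into product form, while keeping careful track of the redundant $2|\Lambda_0|$-component bookkeeping in $\mathcal{C}_1 \mathcal{C}_1^*$ and the resulting particle--hole pairing of the spectrum that makes $-\tr \Gamma \log \Gamma$ (over the full $2|\Lambda_0| \times 2|\Lambda_0|$ matrix) reproduce the sum of single-mode binary entropies rather than twice that value.
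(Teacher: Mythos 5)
The paper offers no proof of this lemma at all---it is quoted from \cite{AR-S}, where it is proved (generalizing the argument of \cite{VLRK}) along exactly the lines you propose: the string cancellation $c_j = P\,c_j^{(1)}$ with $P=\sigma_1^z\cdots\sigma_{a-1}^z$ supported in $\Lambda\setminus\Lambda_0$ gives both the restriction property and quasi-freeness of $\rho^1$, and the entropy formula then follows from Bogoliubov diagonalization of the reduced correlation matrix and the mode-wise binary entropies, with your bookkeeping remark about the $\{\xi_i,1-\xi_i\}$ eigenvalue pairing of the $2|\Lambda_0|\times 2|\Lambda_0|$ matrix being exactly the right point. The one loose end is your parenthetical ``odd monomials vanish on both sides'': for an odd local monomial a single factor of $P$ survives the cancellation, so you additionally need the observation that $P=\prod_{i<a}(\idty-2c_i^*c_i)$ is an \emph{even} polynomial in the global Jordan--Wigner fermions, whence $P$ times an odd monomial is an odd polynomial whose $\rho$-expectation vanishes term by term by quasi-freeness of $\rho$.
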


We note that, for the case of the ground state of the XY chain in constant magnetic field, the identity (\ref{eq:EE:Wick}) was first given in \cite{VLRK}. There, together with an exact expression for the correlation matrix in the thermodynamic limit $\Lambda \to \Z$, (\ref{eq:EE:Wick}) was used to numerically predict the dependence of the ground state entanglement on the size of the subsystem. These predictions, again for constant field and in the thermodynamic limit, were later rigorously proven, see \cite{IJK, Itsetal} for most general results and additional references.

Applying Lemma~\ref{lem:AR-S} to $\rho_t$, we see that we must investigate the correlation matrix $\Gamma_{\rho_t}^{\mathcal{C}}$.
To see how this correlation matrix evolves in time, we use the 
following lemma.
\begin{lem} \label{lem:corr:tautC}
Let $\rho$ be a density matrix on $\mathcal{H}$. Then
\begin{equation} \label{timecor}
\Gamma_{\rho_t}^{\mathcal{C}}=e^{-2itM}\Gamma^{\mathcal{C}}_{\rho}e^{2itM}.
\end{equation}
\end{lem}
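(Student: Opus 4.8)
The plan is to transfer the time dependence from the state onto the Nambu vector $\mathcal{C}$ and then to show that $\mathcal{C}$ evolves linearly with generator $M$. First I would record the elementary but crucial reduction: since $\rho_t = e^{-itH}\rho e^{itH}$, cyclicity of the trace moves the evolution onto the observables, and because $\tau_t(A)=e^{itH}Ae^{-itH}$ is a $*$-automorphism (hence multiplicative and adjoint-preserving),
\begin{equation}
(\Gamma^{\mathcal{C}}_{\rho_t})_{ab}=\langle \mathcal{C}_a(\mathcal{C}^*)_b\rangle_{\rho_t}=\langle \tau_t(\mathcal{C}_a)\,\tau_t(\mathcal{C}_b)^*\rangle_{\rho}.
\end{equation}
So the lemma follows once I know how each entry of $\mathcal{C}$ evolves, and the entire problem collapses to the single identity $\tau_t(\mathcal{C})=e^{-2itM}\mathcal{C}$.

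To establish that identity I would pass to the diagonalizing Bogoliubov variables $\mathcal{B}=W\mathcal{C}$, in which $H=2\sum_j\lambda_j b_j^*b_j-E_1\I$. There the Heisenberg evolution is immediate: a one-line computation using $b_j^2=0$ and $\{b_j,b_j^*\}=\I$ gives $[H,b_j]=-2\lambda_j b_j$ and $[H,b_j^*]=2\lambda_j b_j^*$, so that $\frac{d}{dt}\tau_t(\mathcal{B})=i\tau_t([H,\mathcal{B}])=-2iD\,\tau_t(\mathcal{B})$ with $D=\bigoplus_j\mathrm{diag}(\lambda_j,-\lambda_j)$, i.e.\ $\tau_t(\mathcal{B})=e^{-2itD}\mathcal{B}$. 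Since $\tau_t$ is linear and $W$ is a constant (real, orthogonal) matrix, I conjugate back: $\tau_t(\mathcal{C})=W^t\tau_t(\mathcal{B})=W^te^{-2itD}W\,\mathcal{C}=e^{-2it\,W^tDW}\mathcal{C}$, and orthogonality of $W$ together with $WMW^t=D$ gives $W^tDW=M$, hence $\tau_t(\mathcal{C})=e^{-2itM}\mathcal{C}$.

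It remains to assemble the two pieces. Writing $\tau_t(\mathcal{C}_a)=\sum_c(e^{-2itM})_{ac}\mathcal{C}_c$ and taking adjoints, I use that $M$ is real and symmetric, so $\overline{(e^{-2itM})_{bd}}=(e^{2itM})_{bd}=(e^{2itM})_{db}$, to obtain $\tau_t(\mathcal{C}_b)^*=\sum_d(e^{2itM})_{db}(\mathcal{C}^*)_d$. Substituting both expansions into the reduction above yields
\begin{equation}
(\Gamma^{\mathcal{C}}_{\rho_t})_{ab}=\sum_{c,d}(e^{-2itM})_{ac}\,\langle\mathcal{C}_c(\mathcal{C}^*)_d\rangle_\rho\,(e^{2itM})_{db}=\left(e^{-2itM}\,\Gamma^{\mathcal{C}}_\rho\,e^{2itM}\right)_{ab},
\end{equation}
which is exactly (\ref{timecor}).

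I expect the only real obstacle to be the middle step, namely that $M$ is genuinely the generator of the one-particle dynamics. The tempting direct route --- expanding $[H,\mathcal{C}]$ from $H=\mathcal{C}^*M\mathcal{C}$ --- does not collapse to $-2M\mathcal{C}$ on its own, because the entries of $\mathcal{C}$ are linearly dependent through the CAR; it works only after invoking the particle--hole symmetry $JMJ=-M$ of the block matrix $M$ (with $J=\bigoplus_j\sigma^x$ encoding $\{\mathcal{C}_a,\mathcal{C}_b\}$), which is equivalently what makes the diagonalizing $W$ orthogonal. Passing through the already-diagonalized $b$-variables localizes this subtlety to a single clean fact, the orthogonality of $W$ and $WMW^t=D$, so the remaining work is just the adjoint and reality bookkeeping in the last step.
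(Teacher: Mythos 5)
Your proof is correct and takes essentially the same route as the paper: both derive $\tau_t(b_j)=e^{-2it\lambda_j}b_j$, $\tau_t(b_j^*)=e^{2it\lambda_j}b_j^*$ from the free-fermion form (\ref{freefermi}), conjugate by the Bogoliubov matrix $W$ using (\ref{eq:Mdiag}) and (\ref{cov}) to obtain $\tau_t(\mathcal{C})=e^{-2itM}\mathcal{C}$, and conclude by linearity of $\langle\cdot\rangle_{\rho}$. Your write-up merely makes explicit the commutator computation and the adjoint/reality bookkeeping that the paper compresses into ``which then implies'' and a citation of \cite{HamzaSimsStolz}.
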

\begin{proof}
From the free Fermion form (in terms of the $b$-operators) of the Hamiltonian generating the dynamics, see (\ref{freefermi}),
one finds that
\begin{equation}
\tau_t(b_j) = e^{-2it \lambda_j} b_j \quad \mbox{and} \quad \tau_t(b_j^*) = e^{2it \lambda_j} b_j^*.
\end{equation}
In vector form this can be expressed as
\begin{equation} \label{BCdynamics}
\tau_t( \mathcal{B}) = \bigoplus_{j=1}^n\begin{pmatrix}
                         e^{-2it \lambda_j} & 0 \\
                         0 & e^{2it \lambda_j} \\
                       \end{pmatrix}
\mathcal{B}, \quad \mbox{which then implies} \quad \tau_t( \mathcal{C}) = e^{-2it M} \mathcal{C},
\end{equation}
the latter formula a consequence of (\ref{eq:Mdiag}) and the change of variables (\ref{cov}). (For more details regarding
such calculations see e.g. Section 3.1 of \cite{HamzaSimsStolz}.)
As a result,
\begin{equation}
\Gamma_{\rho_t}^{\mathcal{C}} = \langle \tau_t(\mathcal{C} \mathcal{C}^*) \rangle_{\rho} =  \langle e^{-2it M} \mathcal{C} \mathcal{C}^* e^{2it M} \rangle_{\rho}
\end{equation}
and then (\ref{timecor}) follows using that $\langle \cdot \rangle_{\rho}$ is linear.
\end{proof}

Finally, to exploit (\ref{timecor}) we need to compute the correlation matrix of the initial density matrix $\rho$.
That is the content of the next lemma.

\begin{lem}\label{lem:corinirho}
The correlation matrix of $\rho=\bigotimes_{k=1}^m\rho_{\psi_k}$ with respect to $\mathcal{C}$ is
\begin{equation}\label{directsum}
\Gamma_{\rho}^{\mathcal{C}}=\bigoplus_{j=1}^m \Gamma_{\rho_{\psi_k}}^{\mathcal{C}^{(k)}}.
\end{equation}
\end{lem}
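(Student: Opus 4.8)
The plan is to compute the individual matrix entries $(\Gamma_\rho^{\mathcal{C}})_{a,b} = \langle \mathcal{C}_a \mathcal{C}_b^* \rangle_\rho$ directly, showing that they vanish whenever the indices $a,b$ belong to different intervals $\Lambda_k$, while reproducing the local entry $(\Gamma_{\rho_{\psi_k}}^{\mathcal{C}^{(k)}})_{a,b}$ when both belong to the same $\Lambda_k$. The entire content lies in the \emph{non-locality} of the Jordan--Wigner operators $c_j = \sigma_1^z \cdots \sigma_{j-1}^z a_j$: a priori an entry linking distant intervals need not vanish, precisely because of the long $\sigma^z$-string. The first step is therefore to rewrite each global operator through its local counterpart. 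For $j \in \Lambda_k$ one splits the string at the left endpoint of $\Lambda_k$ to obtain $c_j = P_{<k}\, c_j^{(k)}$ and $c_j^* = P_{<k}\, (c_j^{(k)})^*$, where $P_{<k} := \prod_{i \in \Lambda_1 \cup \cdots \cup \Lambda_{k-1}} \sigma_i^z$ is the parity on the sites strictly to the left of $\Lambda_k$ and $c_j^{(k)}$ is the local operator appearing in $\mathcal{C}^{(k)}$. Since $P_{<k}$ acts only on intervals to the left of $\Lambda_k$, it commutes with every factor of $\mathcal{C}^{(k)}$, and $P_{<k}^2 = \I$.

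The second, decisive, ingredient is that each $\psi_k$ has \emph{definite parity}. Because $H_{\Lambda_k}$ commutes with the local parity operator $P^{(k)} := \prod_{i\in\Lambda_k}\sigma_i^z$ (every term of the XY chain flips an even number of spins) and $H_{\Lambda_k}$ is almost surely simple, one gets $P^{(k)}\psi_k = \pm\psi_k$. I will use this in the form that any single local Jordan--Wigner operator $\xi \in \{c_\ell^{(k)}, (c_\ell^{(k)})^*\}$ anticommutes with $P^{(k)}$, whence $\langle \xi \rangle_{\rho_{\psi_k}} = \langle P^{(k)}\psi_k, P^{(k)}\xi\psi_k\rangle = -\langle \xi \rangle_{\rho_{\psi_k}} = 0$. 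This is the crucial cancellation mechanism, and deducing definite parity from the simplicity hypothesis is the step I would take most care over.

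With these two facts the computation is mechanical. Writing $\mathcal{C}_a \mathcal{C}_b^* = P_{<k}\,\xi_j^{(k)}\,P_{<k'}\,\zeta_\ell^{(k')}$ with $j\in\Lambda_k$, $\ell\in\Lambda_{k'}$ and $\xi,\zeta$ single local factors, I would regroup the two parity strings interval by interval. When $k\ne k'$ every interval contributes either $(P^{(i)})^2 = \I$ or a bare parity factor $P^{(i)}$, except that the restriction of the product to each of the two active intervals $\Lambda_k,\Lambda_{k'}$ is an odd operator (a single Jordan--Wigner factor, possibly multiplied by the even operator $P^{(i)}$). Since $\rho = \bigotimes_i \rho_{\psi_i}$ is a product state, the expectation factorizes across intervals, and any odd factor forces it to vanish by the parity argument; this yields the off-diagonal blocks. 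When $k=k'$ the two copies of $P_{<k}$ collapse (using $P_{<k}^2 = \I$ and that $\xi_j^{(k)}$ commutes with $P_{<k}$), leaving $\langle \xi_j^{(k)} \zeta_\ell^{(k)}\rangle_{\rho_{\psi_k}}$, which is exactly the $(a,b)$ entry of $\Gamma_{\rho_{\psi_k}}^{\mathcal{C}^{(k)}}$. Assembling the blocks gives the claimed direct sum.

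The main obstacle, as indicated, is the bookkeeping forced by the Jordan--Wigner strings: one must verify that after the pairwise cancellation of parity factors at least one odd factor always survives on an active interval in the cross-interval case, so that definite parity can be invoked. Everything else — factorization of expectations over the product state, the identity $P^2 = \I$, and the commutation of $P_{<k}$ with operators supported on $\Lambda_k$ — is routine.
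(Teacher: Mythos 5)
Your proposal is correct, and its skeleton coincides with the paper's proof: compute the entries of $\Gamma_\rho^{\mathcal{C}}$ one pair at a time, split the Jordan--Wigner string at the left edge of each block, factorize the expectation over the product state, and kill the cross-block entries because an odd factor survives on some block. The genuine difference is the mechanism invoked for that vanishing. The paper, having already recorded (from the simplicity assumption, via \cite{AR-S}) that each $\rho_{\psi_k}$ is quasi-free, simply observes that the factor living on $\Lambda_{k'}$ is the single local operator $(c_{j'}^{(k')})^{\#}$, whose expectation is zero because quasi-free states annihilate odd monomials. You instead re-derive the needed fact from scratch: $[H_{\Lambda_k}, P^{(k)}]=0$ together with almost sure simplicity forces each $\psi_k$ to have definite parity, and odd local operators anticommute with $P^{(k)}$, so their expectations vanish. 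The two mechanisms rest on the same hypothesis, and your parity statement is exactly the odd-monomial part of quasi-freeness that the paper actually uses. What your route buys is self-containedness: the lemma then needs no quasi-free formalism at all. What the paper's route buys is economy: quasi-freeness of the $\rho_{\psi_k}$ must be established anyway, since Lemma \ref{lem:AR-S} and the whole correlation-matrix approach to $\mathcal{E}(\rho_t)$ depend on it, so citing it here costs nothing. One last remark: a single vanishing factor suffices, and the cleanest one is on $\Lambda_{k'}$ (say for $k<k'$), where the restriction is a bare local Jordan--Wigner operator; your additional bookkeeping showing that the block-$\Lambda_k$ factor $\xi^{(k)}P^{(k)}$ is also odd is correct but not needed.
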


\begin{proof}
(\ref{directsum}) follows directly from the following argument, for $1 \leq j \leq j' \leq n$, then
\begin{equation}
 \langle (c_j)^{\#} (c_{j'})^{\#} \rangle_{\rho} = \left\{ \begin{array}{cc} 0 & \mbox{if } j \in \Lambda_k \mbox{ and } j' \in \Lambda_{k'} \mbox{ with } k \neq k', \\
\left\langle (c_j^{(k)})^{\#} (c_{j'}^{(k)})^{\#} \right\rangle_{\rho_{\psi_k}} & \mbox{if } j,j' \in \Lambda_k .\end{array} \right.
\end{equation}
The claim when $j$ and $j'$ are in the same block is clear, as in this case one checks from the definition of the local Jordan-Wigner systems that
\begin{equation}
 (c_j)^{\#} (c_{j'})^{\#} = \idty \otimes (c_j^{(k)})^{\#} (c_{j'}^{(k)})^{\#} \otimes \idty,
 \end{equation}
with the two identity operators acting on $\mathcal{H}_{\Lambda_1} \otimes \ldots \otimes \mathcal{H}_{\Lambda_{k-1}}$ and $\mathcal{H}_{\Lambda_{k+1}} \otimes \ldots \otimes \mathcal{H}_{\Lambda_m}$, respectively.

To see that distinct blocks $k \not= k'$ yield zero, we
have used that $\rho_{\psi_{k'}}$ is quasi-free and thus the expectation associated to $c_{j'}^*$ is zero.
\end{proof}

We now prove our main result.

\begin{proof}[Proof of Theorem~\ref{thm:DAL}]

An important first step is to write the eigenvectors $\psi_k$ of the restricted XY chains $H_{\Lambda_k}$ as eigenvectors of equivalent free Fermion systems:
As recalled in Section~\ref{sec:effham}, an ONB of eigenvectors $\psi_{\alpha}$, $\alpha \in \{0,1\}^n$, of $H$ is given by the Fermion basis (\ref{eq:PsiAlpha}). The same argument applies to each $H_{\Lambda_k}$, $k=1,\ldots,m$, giving Fermion bases
\begin{equation} \psi_{\alpha^{(k)}}, \quad \alpha^{(k)} \in \{0,1\}^{|\Lambda_k|}
\end{equation}
in $\mathcal{H}_{\Lambda_k}$, $k=1,\ldots,m$. We know that, almost surely, each $H_{\Lambda_k}$ has simple spectrum. This means that each of the eigenvectors $\psi_k$ of $H_{\Lambda_k}$ in the definition (\ref{eq:instate}) of $\rho$ must, up to a phase, be equal to one of the Fermion eigenvectors $\psi_{\alpha^{(k)}}$ and the corresponding eigenprojector equal to $\rho_{\alpha^{(k)}} = |\psi_{\alpha^{(k)}} \rangle \langle \psi_{\alpha^{(k)}}|$. This shows that
\begin{equation}
\rho = \rho_{\alpha} := |\psi_{\alpha} \rangle \langle \psi_{\alpha} |,
\end{equation}
where $\psi_{\alpha} := \psi_{\alpha^{(1)}} \otimes \ldots \otimes \psi_{\alpha^{(m)}}$, $\alpha := (\alpha^{(1)}, \ldots, \alpha^{(m)}) \in \{0,1\}^n$. Thus the claim (\ref{AreaLaw}) is equivalent to
\begin{equation} \label{AreaLaw2}
\mathbb{E} \left( \sup_{t,\alpha} \mathcal{E}((\rho_{\alpha})_t) \right) \le C < \infty
\end{equation}
uniformly in $n$, $a$ and $b$.

In the following we fix $t$ and $\alpha$ and abbreviate $\Gamma := \Gamma_{(\rho_{\alpha})_t}^{\mathcal{C}}$. Also, let $\Gamma_1$ be the restriction of $\Gamma$ to span$\{e_{2j-1}, e_{2j}: j \in \Lambda_0\}$. As explained at the beginning of the section, $\rho_t$ is quasi-free. Thus, by Lemma~\ref{lem:AR-S}, $\Gamma_1$ is the $2|\Lambda_0| \times 2|\Lambda_0|$ correlation matrix of the reduced state $(\rho_{\alpha})_t^1$ and
\begin{equation} \label{traceid}
\mathcal{E}((\rho_{\alpha})_t) = - \tr \Gamma_1 \log \Gamma_1.
\end{equation}
To find an upper bound for the latter, we follow an argument in the proof of Theorem~1.1 in \cite{AR-S}, which itself generalized an earlier argument in \cite{PasturSlavin}. We refer to these works for additional detail for some of the following.

In particular, the Fermionic correlation matrix $\Gamma_1$ has pairs of eigenvalues $\xi_j, 1-\xi_j$, $j=1,\ldots,|\Lambda_0|$, where $0\le \xi_j \le 1/2$. Thus, using (\ref{traceid}), the elementary bound
\begin{equation}
-x \log x + (1-x) \log (1-x) \le 2 \log 2 \cdot \sqrt{x(1-x)}, \quad 0<x<1,
\end{equation}
the Peierls-Bogoliubov inequality, and the elementary inequality  $\sqrt{x}+\sqrt{y}\leq \sqrt{2}\sqrt{x+y}$, one gets
\begin{equation} \label{onemore}
\mathcal{E}((\rho_{\alpha})_t) \le \sqrt{2} \log 2 \sum_{j\in \Lambda_0} \left( \tr (\Gamma_1 (\idty - \Gamma_1) )_{jj} \right)^{1/2}.
\end{equation}
Using that $\Gamma$ is an orthogonal projection, one can show, for $j\in \Lambda_0$,
\begin{equation}
(\Gamma_1 (\idty - \Gamma_1))_{jj} = \sum_{k\in \Lambda \setminus \Lambda_0} \Gamma_{jk} (\Gamma_{jk})^*.
\end{equation}
Inserting this into (\ref{onemore}) leads to
\begin{equation} \label{PSbound}
\mathcal{E} ( (\rho_{\alpha})_t) \leq  2\log 2 \sum_{\ell \in \Lambda_0}\sum_{\ell' \in \Lambda \setminus \Lambda_0} \| \Gamma_{\ell ,\ell'} \|
\end{equation}
for all $\alpha$ and $t$.

Thus we are led to having to bound the expectations of $\|\Gamma_{\ell,\ell'}\|$. By Lemma~\ref{lem:corr:tautC},
\begin{equation}
\Gamma = e^{-2itM} \Gamma_{\rho_{\alpha}}^{\mathcal{C}} e^{2itM}.
\end{equation}
This yields
\begin{equation}
\Gamma = e^{-2itM} \bigoplus_{k=1}^m \Gamma_{\rho_{\alpha^{(k)}}}^{\mathcal{C}^{(k)}} e^{2itM},
\end{equation}
where we have also used Lemma~\ref{lem:corinirho}. As recalled in Section~\ref{app:carqf}, the correlation matrices of the Fermion basis vectors are related to spectral projections of the effective Hamiltonian by (\ref{specproj}). When applied to the restrictions $H_{\Lambda_k}$ of the XY chain, this gives
\begin{equation} \label{specprojk}
 \Gamma_{\rho_{\alpha^{(k)}}}^{\mathcal{C}^{(k)}} = \chi_{\Delta_{\alpha^{(k)}}}(M_k), \quad k=1,\ldots,m.
 \end{equation}
Here $M_k$ is the effective Hamiltonian for $H_{\Lambda_k}$, i.e.\ the restriction of the block-Jacobi matrix (\ref{effHam}) to $\Lambda_k$, $\Delta_{\alpha^{(k)}}$ is a suitable subset of the spectrum of $M_k$ (defined in terms of the eigenvalues of $M_k$ in analogy to (\ref{Deltas})), and the right hand side of (\ref{specprojk}) denotes the spectral projection for $M_k$ onto $\Delta_{\alpha^{(k)}}$.

In summary, we find that
\begin{equation}\label{eq:Gamma}
\Gamma = e^{-2itM} \chi_{\alpha} e^{2itM},\ \text{where}\ \chi_{\alpha}:=\bigoplus_{k=1}^m \chi_{\Delta_{\alpha^{(k)}}}(M_k)
\end{equation}
Using (\ref{eq:Gamma}), it is clear that
\begin{equation}\label{eq:Gammahat:elements}
  \Gamma_{\ell, \ell'}=\sum_{z, z' \in \Lambda}(e^{-2itM})_{\ell, z}(\chi_{\alpha})_{z, z'}(e^{2itM})_{z', \ell'}.
\end{equation}
Let us denote by `$\sup$' the supremum over all $\alpha \in \{0,1\}^n$ and $t\in \R$.
The H\"older inequality implies
\begin{multline} \label{holder}
\mathbb{E} \left( \sup \|  \Gamma_{\ell, \ell'} \| \right) \leq  \sum_{z,z' \in\Lambda} \mathbb{E}\left(\sup \|(e^{-2itM})_{\ell, z}\|^3\right)^\frac{1}{3}
  \mathbb{E}\left(\sup \|(\chi_{\alpha})_{z, z'}\|^3\right)^\frac{1}{3} \times \\
\times \mathbb{E}\left(\sup \|(e^{2itM})_{z', \ell' }\|^3\right)^\frac{1}{3}
\end{multline}
Since all the operators on the right above have norm bounded by 1, we may estimate $\| \cdot \|^3 \leq \| \cdot \|$ and
then apply our decay assumption (\ref{ecorloc}) on eigencorrelator localization. Note that by assuming the validity of (\ref{ecorloc}) for {\it all} $n$ and by the translation invariance of the distribution of random parameters, (\ref{ecorloc}) also applies to the effective Hamiltonians $M_k$ of the subsystems. Furthermore, both, eigenprojections $\chi_{\Delta_{\alpha^{(k)}}}(M_k)$ as well as the time evolution $e^{\pm 2it M}$, fall within the set of functions $|g|\le 1$ covered by (\ref{ecorloc}). This leads to bounds for all the matrix elements appearing in (\ref{holder}). Inserting the result into (\ref{PSbound}) gives
\begin{eqnarray}
\lefteqn{\mathbb{E} \left( \sup \mathcal{E}( (\rho_{\alpha})_t) \right)} \\ & \leq & 2 \log 2 \cdot C^3 \sum_{\ell, \ell'} \sum_{z,z' }
\frac{1}{(1+ |\ell - z|)^{\beta /3}} \frac{1}{(1+ |z - z'|)^{\beta /3}} \frac{1}{(1+ |z' - \ell'|)^{\beta /3}}  \nonumber \\
& \leq & 2 \log 2 \cdot C^3 \cdot D^2 \sum_{\ell \in \Lambda_0} \sum_{\ell' \in \Lambda \setminus \Lambda_0} \frac{1}{(1+ | \ell - \ell'|)^{\beta/3}} \nonumber \\
& \leq & 4 \log 2 \cdot C^3 \cdot D^2 \sum_{j=1}^{\infty} \frac{j}{(1+j)^{\beta/3}} =: C' < \infty \nonumber
\end{eqnarray}
if $\beta >6$. Note that for the second bound above we used the following simple fact. For any $\beta >1$, there
is a positive number $D < \infty$ for which
\begin{equation}
\sum_{z \in \mathbb{Z}} \frac{1}{(1+ |x-z|)^{\beta}} \frac{1}{(1+|z-y|)^{\beta}} \leq D \frac{1}{(1+|x-y|)^{\beta}} \quad \mbox{for all } x,y \in \mathbb{Z} \, .
\end{equation}
For example, one may take $D =2^{\beta} \sum_{z} (1+|z|)^{-\beta}$. This completes the proof of (\ref{AreaLaw2}).
\end{proof}

We finally comment on how the above proof changes, in fact simplifies, under the special cases considered in Corollary~\ref{cor:updown} and Corollary~\ref{cor:AR-S}. In the case of initial condition $\rho = |e_{\alpha} \rangle \langle e_{\alpha}|$ given by up-down spins, $\Gamma$ in (\ref{eq:Gamma}) will be given as
\begin{equation}
\Gamma = e^{-2itM} N_{\alpha} e^{2itM},
\end{equation}
where $N_{\alpha}$ is a form of the number operator,
\begin{equation}
N_{\alpha} = \diag (N_{\alpha,j}: j=1,\ldots,n), \quad N_{\alpha,j} = \begin{pmatrix} \delta_{\alpha_j,0} & 0 \\ 0 & \delta_{\alpha_j,1}. \end{pmatrix}
\end{equation}
Then (\ref{eq:Gammahat:elements}) will read
\begin{equation}
\Gamma_{jk} = \sum_r (e^{-itM})_{jr} N_{\alpha,r} (e^{itM})_{rk}.
\end{equation}
As $\|N_{\alpha,r}\|=1$ for all $r$, this leads to an analogue of (\ref{holder}),
\begin{eqnarray}
\E \left( \sup_{t,\alpha} \| \Gamma_{jk} \|  \right) & \le & \sum_r \E \left( \sup_t \|(e^{-itM})_{jr}\| \|(e^{itM})_{r k}\| \right) \\
& \le & \sum_r \left( \E(\sup_t \|(e^{-itM})_{jr}\|^2 ) \right)^{1/2} \left( \E(\sup_t \|(e^{itM})_{r k}\|^2 ) \right)^{1/2}. \nonumber
\end{eqnarray}
This will give an area law requiring $\beta>4$ only.

Finally, for Corollary~\ref{cor:AR-S} we simply get
\begin{equation}
\Gamma = e^{-2itM} \chi_{\Delta_{\alpha}}(M) e^{2itM} = \chi_{\Delta_{\alpha}}(M).
\end{equation}
From here the proof proceeds without the need for using H\"older's inequality and gives an area law for $\beta>2$.

%
%
%
%
%

\section{On some basics concerning the CAR algebra and Quasi-Free States} \label{app:carqf}

\subsection{Basic concepts} A number of the results in this article depend on some general properties of quasi-free states defined
on a CAR algebra. For the convenience of the reader, we collect some of these facts. A more thorough
discussion may be found e.g. in Section 5.2 of \cite{BRvol2}.

In general, given a separable complex Hilbert space $\mathcal{K}$, there is a unique, up to $*$-isomorphism,
$C^*$-algebra $\mathcal{A} = \mathcal{A}( \mathcal{K})$ generated by the identity, $\idty$, and elements $c(f)$ and
$c^*(g)$ for $f,g \in \mathcal{K}$ which satisfy:
\begin{enumerate}
\item the mappings $f \mapsto c(f)$ is anti-linear and $g \mapsto c^*(g)$ is linear;
\item $\{ c(f), c(g) \} = \{c^*(f), c^*(g) \} = 0$ for all $f,g \in \mathcal{K}$;
\item $\{ c(f), c^*(g) \} = \langle f, g \rangle \idty $
\end{enumerate}
This algebra $\mathcal{A}$ is called the CAR algebra generated by these $c$-operators over $\mathcal{K}$.
A proof of this fact is given as Theorem 5.2.5 in \cite{BRvol2}.

We will mainly be concerned with the CAR algebra defined over the finite dimensional Hilbert space
$\mathcal{K}_n = \ell^2( \{1, \cdots, n \})$ for some integer $n \geq 1$. Let us denote this CAR algebra by $\mathcal{A}_n$.
An explicit representation of $\mathcal{A}_n$ is given by
\begin{equation}
c(f) = \sum_{j=1}^n \overline{f(j)} c_j \quad \mbox{and} \quad c^*(g) = \sum_{j=1}^n g(j) c_j^* \quad \mbox{for all } f,g \in \mathcal{K}_n
\end{equation}
where, for $1 \leq j \leq n$, the operators $c_j$ and $c_j^*$ are the Jordan-Wigner variables introduced
in (\ref{eq:cjs}). One readily checks that the above properties hold and it is clear that, with respect to the
canonical basis vectors $\{ \delta_j \}$ in $\mathcal{K}_n$, $c( \delta_j) = c_j$
and $c^*( \delta_j) = c_j^*$ for all $1 \leq j \leq n$.
For many calculations, it is convenient to define general linear combinations of these operators.
To this end, one sets
\begin{equation}
C(f,g) : = c(f) + c^*(g)  \quad \mbox{for all } f,g \in \mathcal{K}_n \, .
\end{equation}

A state $\omega$ on a CAR algebra $\mathcal{A}$ is a mapping $\omega : \mathcal{A} \to \mathbb{C}$
that is linear, positive, and normalized so that $\omega( \idty) = 1$.  The state $\omega$ on
a CAR algebra $\mathcal{A}$ is said to be {\it quasi-free} if for all $m \geq 1$ and all collections
$\{ f_j \}_{j=1}^m, \{ g_j \}_{j=1}^m \subset \mathcal{K}$ and $C_j := C(f_j,g_j)$ we have that
\begin{equation} \label{quasifree}
\omega \left( \prod_{j=1}^m C_j\right) = \left\{\begin{array}{ll}
                                                    0, & \hbox{if $m$ is odd,} \\
                                                    \displaystyle\sum_{k=2}^m (-1)^{k} \omega(C_k C_1) \omega \left(\prod_{\tiny{\begin{array}{c}
                                                  j\in\{2,\ldots,m\}\setminus\{k\}
                                                \end{array}}
}^m C_j\right), & \hbox{if $m$ is even.}
                                                  \end{array}
                                                \right.
\end{equation}
For even $m$ this is a version of Wick's rule and can be iterated to yield that general expectations $\omega (\prod_{k=1}^m C_k)$ are sums of products of pair expectations. The exact resulting expression is known as the pfaffian pf$[ \mathcal{C}]$ of the skew-adjoint $m \times m$-matrix $\mathcal{C} = \mathcal{C}(m)$ with entries
\begin{equation}
\mathcal{C}_{j,k} = \omega \left(C_j C_k \right) \quad \mbox{for all } 1 \leq j<k \leq m,
\end{equation}
and extended to be skew-adjoint. As the pfaffian of an odd-dimensional skew adjoint matrix is generally set to zero, one may write (\ref{quasifree}) as $\omega \left( \prod_{j=1}^m C_j  \right) = {\rm pf} [ \mathcal{C} ]$. Beyond convenience of notation, this will allow to exploit known facts on pfaffians below.

We will also be interested in dynamically evolved states. Let $\tau_t$ be a dynamics on $\mathcal{A}$,
i.e. a one-parameter group of $*$-automorphisms of $\mathcal{A}$. Often such a dynamics has the form
\begin{equation} \label{dyn}
\tau_t(A) = e^{itH}A e^{-itH} \quad \mbox{for all } A \in \mathcal{A} \mbox{ and all }  t \in \mathbb{R} \,
\end{equation}
where $H$ is some self-adjoint element of $\mathcal{A}$.
Generally when ${\rm dim}( \mathcal{K}) = n < \infty$, as will be the case in all our considerations,
we may identify $\mathcal{A} = \mathcal{A}_n := \mathcal{B}\left(\bigotimes_{j=1}^n \C^2\right)$, as is seen by choosing the representation of the CAR algebra by the Jordan-Wigner operators (\ref{eq:cjs}). Thus in this finite dimensional case, any state $\omega$ on this CAR algebra is in one to
one correspondence with a density matrix $\rho \in \mathcal{A}_n$, i.e.
\begin{equation}
\omega(A) = {\rm Tr}[A \rho] \quad \mbox{for all } A \in \mathcal{A}_n \, ,
\end{equation}
where $0 \leq \rho \leq \idty$ satisfies ${\rm Tr}[ \rho ] = 1$. For any $t \in \mathbb{R}$, such a state $\omega$
can be time-evolved (with respect to the dynamics $\tau_t$) by writing
\begin{equation} \label{eq:evolvedstate}
\omega_t(A) : = \omega( \tau_t(A)) \quad \mbox{for all } A \in \mathcal{A}_n \, .
\end{equation}
In this case,
\begin{equation}
\omega_t(A) = \omega( \tau_t(A)) = {\rm Tr}[ A \rho_t] \quad \mbox{with } \rho_t = e^{-itH} \rho e^{itH} \, ,
\end{equation}
and it is clear that the time-evolution of a state, i.e. $\omega_t$, is still a state.

\subsection{The main example} We now consider the examples of most interest in our work.
Let $H$ be the anisotropic XY-spin chain on the interval $[1,n]$, see (\ref{eq:anisoxychain}). As discussed in
Section~\ref{sec:effham}, a collection of orthonormal eigenvectors of $H$ can be
parametrized by $\alpha \in \{0,1\}^n$; in particular, they have the form:
\begin{equation}\label{psia}
\psi_{\alpha} = (b_1^*)^{\alpha_1} \ldots (b_n^*)^{\alpha_n} \Omega
\end{equation}
where the $b$-operators are as defined in (\ref{cov}), and $\Omega$ is the corresponding vacuum vector.
It is well-known that each of the states $\omega_{\alpha}$ defined by
\begin{equation}
\omega_{\alpha}(A) := {\rm Tr}[ A \rho_{\alpha}] \quad \mbox{for all } A \in \mathcal{A}_n
\end{equation}
with the density matrix $\rho_{\alpha} := | \psi_{\alpha} \rangle \langle \psi_{\alpha} |$ is quasi-free, see e.g. \cite{AR-S}. Thus
eigenstates of the XY model form a class of quasi-free states on $\mathcal{A}_n$. As such, in order
to calculate expectations in these eigenstates, one needs (in principle) only know the corresponding pair expectations.
These pair expectations are often organized in a correlation matrix:
\begin{equation} \label{evcorm}
\Gamma_{\rho_{\alpha}}^{\mathcal{C}} : = \omega_{\alpha} \left( \mathcal{C} \mathcal{C}^* \right)
\end{equation}
where $\mathcal{C} = (c_1, c_1^*, c_2,c_2^*, \cdots, c_n, c_n^*)^T$ is the column vector of
Jordan-Wigner variables discussed above,  $\mathcal{C}^*$ is the corresponding row vector, and
the right side of (\ref{evcorm}) is to be understood entry-wise:
$(\Gamma_{\rho_{\alpha}}^{\mathcal{C}})_{j,k} = \omega_{\alpha}( (\mathcal{C} \mathcal{C}^*)_{j,k})$.
The formula
\begin{equation} \label{specproj}
\Gamma_{\rho_{\alpha}}^{\mathcal{C}} = \chi_{\Delta_\alpha}(M)
\end{equation}
is also well-known to hold in the case that $M$ has simple spectrum, see e.g. \cite{AR-S}. Here the right side is defined through the functional calculus associated to
the self-adjoint operator $M$. In fact, $\chi_{\Delta_\alpha}(M)$ is the spectral projection onto the set
\begin{equation} \label{Deltas}
\Delta_{\alpha}  = \{ \lambda_j \, : \, \alpha_j = 0 \} \cup \{ - \lambda_j \, : \, \alpha_j =1 \} \, .
\end{equation}

As we have discussed above, see (\ref{BCdynamics}), the Heisenberg time evolution $\tau_t(A) = e^{itH} A e^{-itH}$ under the XY chain $H$ is governed by an effective
single-particle Hamiltonian, i.e.\ there is a self-adjoint $2n\times 2n$-matrix $M$ for which $\tau_t(\mathcal{C}) = e^{-2itM} \mathcal{C}$.  Identifying $f, g \in \mathcal{K}_n$ with row vectors $(f(1),\ldots,f(n))$ and $(g(1),\ldots,g(n))$ this is easily seen to give the following.
\begin{lem} \label{lem:qftime}
The Heisenberg dynamics of the operators $C(f,g)$ satisfy
\begin{equation}
\tau_t(C(f,g)) = C(f_t,g_t),
\end{equation}
where $(\bar{f}_t, g_t) = (\bar{f}, g) e^{-2itM}$.
\end{lem}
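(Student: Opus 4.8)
The plan is to recognize $C(f,g)$ as a scalar linear combination of the entries of the Jordan-Wigner column vector $\mathcal{C}=(c_1,c_1^*,\ldots,c_n,c_n^*)^T$ and then simply transport the coefficient row vector through the dynamics $\tau_t(\mathcal{C})=e^{-2itM}\mathcal{C}$ already established in (\ref{BCdynamics}). Concretely, from the definitions $c(f)=\sum_j\overline{f(j)}\,c_j$ and $c^*(g)=\sum_j g(j)\,c_j^*$ one has
\begin{equation}
C(f,g)=c(f)+c^*(g)=v\,\mathcal{C},\qquad v:=(\overline{f(1)},g(1),\overline{f(2)},g(2),\ldots,\overline{f(n)},g(n)),
\end{equation}
where $v$ is the interleaving of $\bar f$ and $g$ dictated by the ordering of $\mathcal{C}$; this is exactly the identification referred to in the statement, so that $v=(\bar f,g)$ in that notation.

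Next I would apply $\tau_t$. Since $\tau_t$ is a $*$-automorphism it is in particular $\mathbb{C}$-linear, so it commutes with the finite scalar combination defining $v\mathcal{C}$:
\begin{equation}
\tau_t(C(f,g))=\sum_{\ell=1}^{2n}v_\ell\,\tau_t(\mathcal{C}_\ell)=v\,\tau_t(\mathcal{C})=v\,e^{-2itM}\,\mathcal{C},
\end{equation}
where the last equality is precisely (\ref{BCdynamics}).

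Finally I would read off $f_t$ and $g_t$ by reversing the first identification. Setting $v_t:=v\,e^{-2itM}$ and defining $f_t,g_t\in\mathcal{K}_n$ through $\overline{f_t(j)}=(v_t)_{2j-1}$ and $g_t(j)=(v_t)_{2j}$, the displayed expression becomes $v_t\,\mathcal{C}=C(f_t,g_t)$, and by construction $(\bar f_t,g_t)=v_t=(\bar f,g)\,e^{-2itM}$, which is the claim.

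The content of the lemma is therefore purely mechanical, and the only point requiring any care is the indexing bookkeeping: one must keep the interleaved ordering of $\mathcal{C}$ consistent between the forward identification $C(f,g)\mapsto v$ and the backward identification $v_t\mapsto(f_t,g_t)$, and one must note that the anti-linearity of $f\mapsto c(f)$ is exactly what forces the conjugation $\bar f$ to appear in the coefficient vector (and hence in the statement). No spectral or probabilistic input is needed; the lemma is simply the operator-level restatement of the single-particle dynamics $e^{-2itM}$.
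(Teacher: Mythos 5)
Your proposal is correct and is exactly the argument the paper intends: the paper offers no written proof beyond the remark that, identifying $f,g$ with row vectors, the claim is ``easily seen'' from $\tau_t(\mathcal{C})=e^{-2itM}\mathcal{C}$ in (\ref{BCdynamics}), which is precisely the linearity-plus-bookkeeping computation you spell out. Your explicit treatment of the interleaved coefficient vector $(\overline{f(1)},g(1),\ldots,\overline{f(n)},g(n))$ and of the anti-linearity forcing $\bar f$ is a faithful (and slightly more careful) rendering of the paper's one-line justification.
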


In this case, if $\omega$ is a quasi-free state, then for any $t \in \mathbb{R}$,  $\omega_t$ given by (\ref{eq:evolvedstate}) is quasi-free as well.
This follows from a direct calculation:
\begin{equation} \label{pfaff1}
\omega_t \left( \prod_{j=1}^m C(f_j, g_j) \right)  =  \omega \left( \prod_{j=1}^m \tau_t (C(f_j,g_j)) \right) =
\omega \left( \prod_{j=1}^m C( (f_j)_t, (g_j)_t) \right)  = {\rm pf}[ \mathcal{C}_t]
\end{equation}
where we have used the automorphism property of $\tau_t$ and the fact that $\omega$ is quasi-free.
Here for any $1 \leq j<k\leq m$,
\begin{equation} \label{pfaff2}
(\mathcal{C}_t)_{j,k} = \omega \left(C((f_j)_t, (g_j))_t C((f_k)_t, (g_k))_t \right) = \omega_t \left(C(f_j, g_j) C(f_k, g_k) \right) \, .
\end{equation}

\subsection{Products of quasi-free states are quasi-free} In this subsection, we prove the following lemma.

\begin{lem} \label{lem:addqf} For $i=1,2$, let $\rho_i$ be density matrices corresponding to quasi-free states on $\mathcal{A}_{n_i}$.
Then $\rho = \rho_1 \otimes \rho_2$ is the density matrix corresponding to a
quasi-free state on $\mathcal{A}_{n_1+n_2}$.
\end{lem}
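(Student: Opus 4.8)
The plan is to reduce everything to the Jordan--Wigner structure of $\mathcal{A}_{n_1+n_2}$ relative to the cut separating the first $n_1$ from the last $n_2$ sites, and then to match the resulting expansion against the classical addition formula for pfaffians. Throughout, let $\omega_i$ be the state of $\rho_i$ and $\omega = \omega_1 \otimes \omega_2$ the state of $\rho = \rho_1 \otimes \rho_2$, so that $\omega(A \otimes B) = \omega_1(A)\,\omega_2(B)$.

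First I would record how the global Jordan--Wigner operators decompose across the cut. Writing $\mathcal{H} = \mathcal{H}_1 \otimes \mathcal{H}_2$ with $\mathcal{H}_1 = \bigotimes_{j=1}^{n_1}\C^2$ and $\mathcal{H}_2 = \bigotimes_{j=1}^{n_2}\C^2$, and letting $P := \sigma_1^z \cdots \sigma_{n_1}^z$ be the parity operator on the first block, one checks directly from (\ref{eq:cjs}) that $c_i = c_i^{(1)} \otimes \idty$ for $i \le n_1$ and $c_{n_1+r} = P \otimes c_r^{(2)}$ for $1 \le r \le n_2$, where $c^{(1)}, c^{(2)}$ are the local Jordan--Wigner operators on the two blocks. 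Splitting each $f_j, g_j \in \mathcal{K}_{n_1+n_2}$ into its components on the two blocks, every generator decomposes as
\begin{equation}
C(f_j,g_j) = C_j^{(1)} \otimes \idty + P \otimes C_j^{(2)},
\end{equation}
with $C_j^{(i)}$ the corresponding local $C$-operator on block $i$. The only facts about $P$ I will need are $P^2 = \idty$ and that $P$ anticommutes with every block-$1$ generator, $P\,C_j^{(1)} = - C_j^{(1)}\,P$.

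Next I would expand $\prod_{j=1}^m C(f_j,g_j)$ over the $2^m$ choices of block-$1$ versus block-$2$ factor, indexing each term by the set $T \subseteq \{1,\ldots,m\}$ where the block-$2$ factor is selected. Since $(A\otimes B)(A'\otimes B') = AA' \otimes BB'$, such a term factors as a block-$1$ operator times a block-$2$ operator: the block-$2$ factor is $\prod_{j\in T} C_j^{(2)}$ in the original order, while the block-$1$ factor is the ordered product of the $C_j^{(1)}$ ($j\notin T$) interlaced with the $P$'s ($j\in T$). Using $P\,C^{(1)} = -C^{(1)}\,P$ to move all parity factors to the right produces a sign $(-1)^{s(T)}$, with $s(T) = \#\{(j,t): t\in T,\ j\notin T,\ j>t\}$, and leaves $P^{|T|}$. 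Applying $\omega = \omega_1\otimes\omega_2$ and using that quasi-free states annihilate odd products (the $m$-odd clause of (\ref{quasifree}), which makes both $\omega_1$ and $\omega_2$ even), only subsets $T$ with $|T|$ and $|T^c|$ both even survive; for these $P^{|T|} = \idty$. In particular the whole expression vanishes for odd $m$, and for even $m$
\begin{equation}
\omega\Big(\prod_{j=1}^m C(f_j,g_j)\Big) = \sum_{\substack{T\subseteq\{1,\ldots,m\}\\ |T|\ \mathrm{even}}} (-1)^{s(T)}\, \omega_1\Big(\prod_{j\notin T} C_j^{(1)}\Big)\, \omega_2\Big(\prod_{j\in T} C_j^{(2)}\Big).
\end{equation}

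Then I would invoke quasi-freeness of $\omega_1$ and $\omega_2$ separately: each surviving factor is the pfaffian of the principal submatrix of $\mathcal{C}^{(i)}$, with $\mathcal{C}^{(i)}_{j,k} := \omega_i(C_j^{(i)} C_k^{(i)})$, indexed by $T^c$ respectively $T$. The same expansion applied to $m=2$ shows the pair correlations add, $\omega(C(f_j,g_j)C(f_k,g_k)) = \mathcal{C}^{(1)}_{j,k} + \mathcal{C}^{(2)}_{j,k}$, so the matrix $\mathcal{C}$ attached to $\omega$ is exactly $\mathcal{C}^{(1)} + \mathcal{C}^{(2)}$. It remains to recognize the displayed sum as $\mathrm{pf}[\mathcal{C}^{(1)} + \mathcal{C}^{(2)}]$; this is precisely the classical addition formula $\mathrm{pf}(A+B) = \sum_{T} (-1)^{s(T)}\,\mathrm{pf}(A[T^c])\,\mathrm{pf}(B[T])$ over even subsets, whose unshuffling sign is the inversion count $s(T)$ above. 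This yields $\omega(\prod_j C_j) = \mathrm{pf}[\mathcal{C}]$, proving $\omega$ quasi-free. I expect the sign bookkeeping to be the main obstacle: one must verify that the sign generated by commuting the Jordan--Wigner parity strings past the block-$1$ operators coincides with the unshuffling sign in the pfaffian addition formula, which I would pin down by comparing both to the parity of the permutation sorting $(1,\ldots,m)$ into ($T^c$ increasing, then $T$ increasing).
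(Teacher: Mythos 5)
Your proof is correct and takes essentially the same route as the paper's: the same Jordan--Wigner decomposition across the cut (the paper writes your $P$ as $(\sigma^z)^{\otimes n_1}$), the same $2^m$-term expansion indexed by subsets with odd terms killed by evenness of the quasi-free states, and the same final reduction to the pfaffian addition formula (the paper cites Muir). The only cosmetic difference is that you express the reordering sign as the inversion count $s(T)$ while the paper writes it in closed form as $(-1)^{|J|/2+\sum_{j\in J}j}$ with $J=T^c$; these coincide on the surviving even subsets, so your sign bookkeeping matches the paper's.
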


In the proof of Theorem~\ref{thm:DAL} in Section~\ref{sec:DAL} above, we use this lemma for the case in which $\rho$ is a product of eigenstates of restrictions of the XY chain to subintervals of $[1,n]$. In this case, a more direct proof of Lemma~\ref{lem:addqf} follows from the fact that the product state can be considered as an eigenstate of an XY chain on $[1,n]$ (replace the interaction parameters $\mu_j$ by zero for all $j$ at the boundaries of subsystems). We nevertheless provide a proof of Lemma~\ref{lem:addqf} in its general form, as we consider this to be of some independent interest.

\begin{proof}
Let us denote by $\{ c_j \}_{j=1}^{n_1+n_2}$ the Jordan-Wigner variables associated to the chain of length $n_1+n_2$.
Our goal is to prove that: for any $m \geq 1$ and any collections $\{ f_j \}_{j=1}^m, \{ g_j \}_{j=1}^m \subset
\mathcal{K}_{n_1+n_2}:= \ell^2( \{ 1, \cdots, n_1+n_2\})$. The relation
\begin{equation} \label{prodqf}
\langle \prod_{j=1}^m C(f_j,g_j) \rangle_{\rho} = {\rm pf}( \mathcal{C} )
\end{equation}
holds, where the anti-symmetric $(m \times m)$ matrix $\mathcal{C}= \mathcal{C}(m)$ has entries
\begin{equation}
\mathcal{C}_{jk} = \langle C(f_j,g_j) C(f_k,g_k)\rangle_{\rho} \quad \mbox{for all } 1 \leq j< k \leq m
\end{equation}
and then extended by anti-symmetry.

The proof of this follows from a number of calculations. These calculations are driven by the
observation that the combined system (of size $n_1+n_2$) can be regarded as a combination
of the two smaller systems (of sizes $n_1$ and $n_2$ respectively) in the following concrete way.
With a slight abuse of notation, for all $1 \leq j \leq n_1+n_2$ relabel the Jordan-Wigner variables introduced above as
$c_j^{(12)}:=c_j$, indicating the combined system. Note that
\begin{equation} \label{sysc}
c_j^{(12)} = \left\{ \begin{array}{ll} c_j^{(1)} \otimes \idty, & 1 \leq j \leq n_1, \\
(\sigma^z)^{\otimes^{n_1}} \otimes c_{j}^{(2)} & n_1+ 1 \leq j \leq n_1+n_2 \end{array} \right.
\end{equation}
where now, for $i=1,2$, the operators $c^{(i)}_k$ denote the corresponding Jordan-Wigner variables of the sub-systems
of size $n_i$.

Identifying each $f \in \mathcal{K}_{n_1+n_2}$ with a sequence of coefficients, we write $f = f^{(1)} +f^{(2)}$ with $f^{(1)}$ the first
$n_1$ coefficients and $f^{(2)}$ the remaining $n_2$ coefficients. With this in mind, the first step in the proof of
(\ref{prodqf}) is to verify that
\begin{equation} \label{step1}
C^{(12)}(f,g) = C^{(1)}(f^{(1)}, g^{(1)}) \otimes \idty + ( \sigma^z)^{\otimes^{n_1}}\otimes C^{(2)}(f^{(2)}, g^{(2)})
\end{equation}
for all $f,g \in \mathcal{K}_{n_1+n_2}$. Next, we claim that for any $m \geq 1$, the matrix $\mathcal{C}^{(12)} := \mathcal{C}$
appearing on the right-side of (\ref{prodqf}) satisfies
\begin{equation} \label{step2}
\mathcal{C}^{(12)} = \mathcal{C}^{(1)} + \mathcal{C}^{(2)}
\end{equation}
in the sense that for any collections $\{ f_j\}_{j=1}^m , \{g_j \}_{j=1}^m \subset \mathcal{K}_{n_1+n_2}$
\begin{eqnarray} \label{mentries}
\mathcal{C}^{(12)}_{jk} & := & \langle C^{(12)}(f_j, g_j) C^{(12)}(f_k,g_k) \rangle_{\rho} \nonumber \\
& = &  \langle C^{(1)}(f_j^{(1)}, g_j^{(1)}) C^{(1)}(f_k^{(1)},g_k^{(1)}) \rangle_{\rho_1} + \langle C^{(2)}(f_j^{(2)}, g_j^{(2)}) C^{(2)}(f_k^{(2)},g_k^{(2)}) \rangle_{\rho_2}  \nonumber \\
& =: &  \mathcal{C}^{(1)}_{jk} + \mathcal{C}^{(2)}_{jk} \quad \mbox{for all } 1 \leq j<k \leq m.
\end{eqnarray}
Lastly we show directly that
\begin{equation} \label{step3}
\langle \prod_{j=1}^m C(f_j,g_j) \rangle_{\rho} = {\rm pf}( \mathcal{C}^{(1)} + \mathcal{C}^{(2)})
\end{equation}
which, using (\ref{step2}), verifies the claim (\ref{prodqf}).

Equation (\ref{step1}) follows directly from (\ref{sysc}) and the analogous formula which holds for $(c^{(12)}_j)^*:= c_j^*$.
For step 2, calculate the pair expectation of interest using (\ref{step1}). As is clear from (\ref{mentries}), the claim (\ref{step2}) amounts to the observation
that the cross terms vanish. Note that e.g.
\begin{multline}
\langle C^{(1)}(f_j^{(1)}, g_j^{(1)}) \otimes \idty \cdot (\sigma^z)^{\otimes^{n_1}} \otimes C^{(2)}(f_k^{(2)}, g_k^{(2)}) \rangle_{\rho} \\ =  \langle  C^{(1)}(f_j^{(1)}, g_j^{(1)}) (\sigma^z)^{\otimes^{n_1}} \rangle_{\rho_1} \cdot \langle  C^{(2)}(f_k^{(2)}, g_k^{(2)}) \rangle_{\rho_2}
\end{multline}
Since each $\rho_i$ is quasi-free, the factors on the right-side are both pfaffians. Each is zero
since the pfaffian of an odd-dimensional, anti-symmetric matrix is zero. (\ref{step2}) is proven.

We need only prove (\ref{step3}). Given a product as on the left of (\ref{step3}) denote by
\begin{equation}
A_j = C^{(1)}(f_j^{(1)}, g_j^{(1)}) \otimes \idty \quad \mbox{and} \quad B_j = ( \sigma^z)^{\otimes^{n_1}}\otimes C^{(2)}(f_j^{(2)}, g_j^{(2)}) \, .
\end{equation}
One readily checks that $\{A_j, B_{j'} \} =0$ for all $j, j'$.
Moreover, it is then clear from (\ref{step1}) that naive expansion yields
\begin{equation}
\prod_{j=1}^m C(f_j,g_j) = \sum_{\alpha \in \{0,1\}^m} \prod_{j=1}^m  A_j^{\alpha_j} B_j^{1- \alpha_j} \, .
\end{equation}
To each $\alpha$, we may associate a set $J( \alpha) \subset \{1, \cdots, m\}$ by taking $J( \alpha) := \{ j : \alpha_j =1\}$.
Using the support of the operators $A_j$ and $B_j$, the fact that the density matrix $\rho$ has a product structure,
and the fact that both $\rho_1$ and $\rho_2$ are quasi-free, it is clear that if $m$ is odd or $\alpha$ is such that
the cardinality of $J(\alpha)$, denoted by $|J( \alpha)|$, is odd, then
\begin{equation}
\langle \prod_{j=1}^m  A_j^{\alpha_j} B_j^{1- \alpha_j}\rangle_{\rho} = 0 \, .
\end{equation}
Now, a careful re-organizing of the remaining products, using the anti-commutation relations, shows that
\begin{equation}
\langle \prod_{j=1}^m  A_j^{\alpha_j} B_j^{1- \alpha_j}\rangle_{\rho} = (-1)^{\frac{|J(\alpha)|}{2} + \sum_{j \in J(\alpha)} j} \cdot \langle \prod_{j\in J(\alpha)}C^{(1)}(f_j^{(1)}, g_j^{(1)}) \rangle_{\rho_1}  \cdot
\langle \prod_{j\in J(\alpha)^c}C^{(2)}(f_j^{(2)}, g_j^{(2)}) \rangle_{\rho_2}.
\end{equation}
As a result (\ref{step3}) follows from the pfaffian formula
\begin{equation}
{\rm pf}[A+B] = \sum_{\stackrel{J \subset \{ 1, 2, \cdots, m\}:}{|J| {\tiny \mbox{ is even}}}} (-1)^{ \frac{|J|}{2} + \sum_{j \in J} j  } \cdot {\rm pf}[A_J] {\rm pf}[B_{J^c}]
\end{equation}
a proof of which one can find in \cite{Muir}.
\end{proof}


\section*{Acknowledgements} G.\ S.\ would like to thank the Isaac Newton Institute for Mathematical Sciences, Cambridge, for support and hospitality during the program {\it Periodic and Ergodic Spectral Problems} in Spring 2015 where part of the work on this paper was done. B.N. acknowledges the stimulating environment and warm hospitality at the Erwin Schr\"odinger International Institute for Mathematical Physics, Vienna during the program {\it Quantum Many-Body Systems, Random Matrices, and Disorder}, July 2015. R. S. would like to thank
the UC Davis Mathematics Department for their hospitality during his recent sabbatical when work on this project began. This research was supported in part by the National Science Foundation under Grants DMS-1069320 (G.S.) and DMS-1515850 (B.N.), and by a grant from the Simons Foundation (\#301127 to R.S.). Finally, we would like to thank a referee for pointing out the interesting extensions (\ref{eq:SWbound1}) and (\ref{eq:SWbound2}) of Theorem~\ref{thm:PNF}.

\bigskip

\end{document}